\documentclass[submission,copyright,creativecommons]{eptcs}
\usepackage{breakurl} 
\usepackage{epigraph}
\usepackage{graphicx}



\usepackage[utf8]{inputenc}
\usepackage[english]{babel}
\usepackage[T1]{fontenc}

\usepackage[usenames,dvipsnames]{xcolor}
\usepackage{graphicx}

\usepackage[sans]{dsfont} 
\usepackage{bbm}
\usepackage[mathscr]{euscript}
\usepackage{cancel} 
\usepackage{amsfonts}


\usepackage{tcolorbox}
\tcbuselibrary{theorems} 

\usepackage{amsmath}
\usepackage{amssymb}
\usepackage{units}
\usepackage{framed}
\usepackage{mdframed}
\usepackage{thmtools, thm-restate} 
\usepackage{ucs}
\usepackage{subcaption}
\usepackage{csquotes}
\usepackage{epigraph}


\usepackage{enumerate}

\usepackage{tikz}
\usetikzlibrary{arrows}
\usepackage{rotating, xcolor}
\usetikzlibrary{shapes}
\usetikzlibrary{hobby}
\usetikzlibrary{decorations.markings}



\makeatletter
\def\l@subsubsection#1#2{}
\def\l@subsection#1#2{}
\makeatother


\tcbset{colback=orange!5,colframe=orange!75!yellow, 
fonttitle= \bfseries, float=htb}






	
    \newcommand{\ket}[1]{\vert  #1 \rangle}
    \newcommand{\bra}[1]{\langle #1 |}
    \newcommand{\inprod}[2]{\langle #1 | #2 \rangle}
	\newcommand{\proj}[2]{\ket{#1}\bra{#2}}
	\newcommand{\pure}[1]{\proj{#1}{#1}}













	



























	



	






\newcommand{\footnoterecall}[1]{\hyperref[#1]{\footnotemark[\value{#1}]}}


\declaretheorem[]{assumption}
\declaretheorem[]{axiom}
\declaretheorem[]{definition}
\declaretheorem[sibling=definition]{theorem}
\declaretheorem[]{example}
\declaretheorem[sibling=definition]{claim}

\newenvironment{proof}{\paragraph{Proof:}}{\hfill$\square$}
\newenvironment{remark}{\textit{Remark:}}{}

\title{Inadequacy of Modal Logic in Quantum Settings}
\author{Nuriya Nurgalieva
\institute{Institute for Theoretical Physics, ETH Z\"{u}rich, 8093 Z\"{u}rich, Switzerland}
\email{nuriyan@phys.ethz.ch} 
\and
Lídia del Rio
\institute{Institute for Theoretical Physics, ETH Z\"{u}rich, 8093 Z\"{u}rich, Switzerland}
\email{lidia@phys.ethz.ch}
}

\begin{document}
\maketitle

\begin{abstract}
We test the principles of classical modal logic in fully quantum settings.
Modal logic models our reasoning in multi-agent problems, and allows us to solve puzzles like the muddy children paradox.  
The Frauchiger-Renner thought experiment highlighted fundamental problems in applying classical reasoning when quantum agents are involved; we take it as a guiding example to test the axioms of classical modal logic. 
In doing so, we find a problem in the original formulation of the Frauchiger-Renner theorem: a missing assumption about unitarity of evolution is necessary to derive a contradiction and prove the theorem. Adding this assumption clarifies how different interpretations of quantum theory fit in, i.e., which properties they violate. 
Finally, we show how most of the axioms of classical modal logic break down in quantum settings, and attempt to generalize them. Namely, we introduce constructions of trust and context, which highlight the importance of an exact structure of trust relations between agents. We propose a challenge to the community: to find conditions for the validity of trust relations, strong enough to exorcise the paradox and weak enough to still  recover classical logic.
\end{abstract}

\maketitle

\setlength{\epigraphwidth}{5in}
\epigraph{

Draco said out loud, ``I notice that I am confused.''

\textit{Your strength as a rationalist is your ability to be more confused by fiction than by reality...}

Draco was confused. 

Therefore, something he believed was fiction.} 
{Eliezer Yudkowsky, \href{https://www.lesserwrong.com/hpmor}{Harry Potter and the Methods of Rationality}}

\section{Introduction}
\label{sec:introduction}

When we talk of paradoxes in science and maths, we can usually boil the discussion down to an arising contradiction between two or more statements. These statements are necessarily associated  with the agents that deduce them, who could either be an external reader or explicit  participants in the setup of the paradox. Such agents start from shared and private prior knowledge of some facts, and along the course of the experiment apply  rules to combine these facts with what they experience, and thus arrive to a contradiction.
In other words, the agents have to apply a certain logic to conduct their reasoning. This type of classical argumentation  is illustrated in various logical puzzles --- see Example~\ref{fig:hats}. 

\begin{example}[Classical hat puzzle.]
Three wise people stand in a line. A hat is put on each of their heads. They are told that each of these hats was selected from a group of five hats: two black hats and three white hats. Arren, standing at the front of the line, can't see either of the people behind him or their hats. Tehanu, in the middle, can see only Arren and his hat. Ged can see both Arren and Tehanu and their hats.
None of the people can see the hat on their own heads. They are then asked to deduce the colour of their own hat: they are allowed to make an announcement every time a bell rings. The first time this happens, no one makes an announcement. At the second bell ring, there is again no announcement. 
Finally, at the third bell ring, Arren makes an announcement, and correctly guesses the colour of his own hat.  
What colour is his hat, and how did he come to the right conclusion? A solution using classical modal logic can be found in Appendix~\ref{appendix:modal}.
This formulation is adapted from~\cite{ThePuzzle}.
    \label{fig:hats}
\end{example}

Classical modal logic is one such system of axioms that allow us to successfully model the reasoning of agents in classical settings \cite{Lewis1918, Lewis1959, Carnap1988, Goldblatt2006, Kripke2007}. However, as we show here, it cannot be applied to more general settings where quantum experiments are conducted.
We highlight the need for more explicit specification of axioms and rules which agents are allowed to use in quantum setups, and draw parallels between the assumptions made in the Frauchiger-Renner thought experiment and the assumptions of classical modal logic.

\subsection{The Frauchiger-Renner experiment}

\begin{figure}[t]
    \centering
    \includegraphics[width=0.7\textwidth]{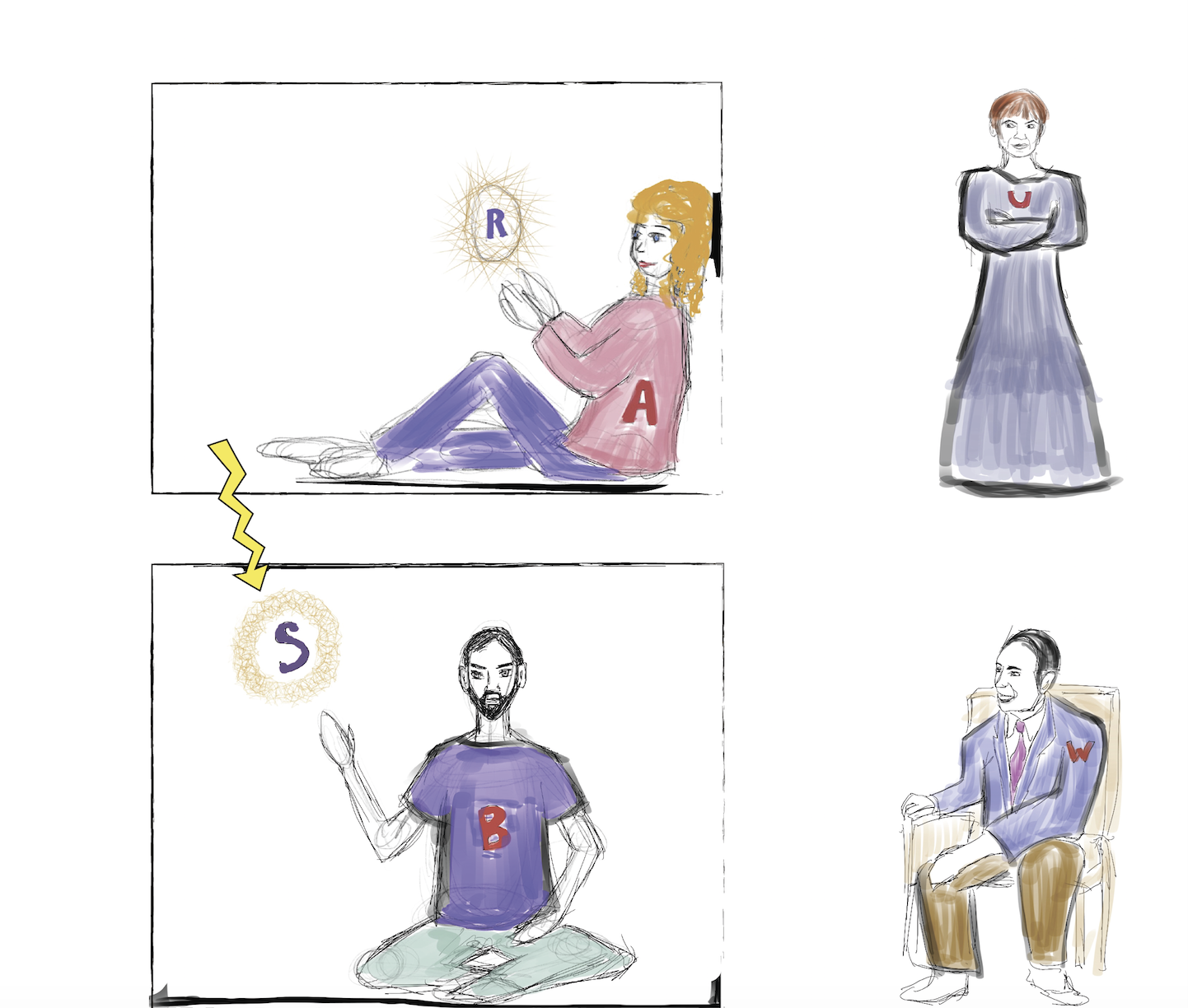}
    \caption{{\bf The Frauchiger-Renner thought experiment.} The setup of the experiment \cite{Frauchiger2018}. Alice measures a qubit $R$ and, depending on the outcome, prepares a qubit $S$ and sends it to Bob, who measures the spin; Alice and Bob's labs are measured by Ursula and Wigner respectively.}
    \label{fig:frauchigerrennersetup}
\end{figure}

We focus on settings where agents are equipped with quantum memories, and 
must argue explicitly about the outcomes of measurements performed by themselves and other agents on such memories. 
A good testing ground is given by Daniela Frauchiger and Renato Renner's  extension of the Wigner's friend thought experiment \cite{Frauchiger2018}. The experiment is schematically shown in Figure~\ref{fig:frauchigerrennersetup}. 

\begin{example}[Frauchiger-Renner thought experiment \cite{Frauchiger2018}.]
This example consists of four participants, Alice, Bob, Ursula and Wigner. Each experimenter is equipped with a quantum memory ($A, B, U$ and $W$, respectively).
In addition, there are two other systems, $R$ and $S$, which are qubits (Figure~\ref{fig:frauchigerrennersetup}).

\paragraph{Initial setting.}
The initial state of the $R$ is $\sqrt{\frac{1}{3}}\ket{0}_R+\sqrt{\frac{2}{3}}\ket{1}_R$. The initial state of $S$ is $\ket{0}_S$. The initial state of the relevant subsystems of the agents' memories is $\ket0_A, \ket0_B, \ket0_U $ and $\ket0_W$. 

\paragraph{Proceeding.} The agents proceed as follows:
\begin{enumerate}[{$t= $} 1.]
    \item Alice measures system $R$ in basis $\{\ket0_R, \ket1_R\}$. She records the result in her memory $A$, and prepares $S$ accordingly: if she obtains outcome $a=0$ she keeps her memory in state $\ket0_A$ and $S$ in state $\ket0_S$; if she obtains outcome $a=1$ she changes her memory to  $\ket1_A$ and $S$ to $\frac1{\sqrt2}(\ket0_S+\ket1_S)$.
    Finally, she gives system $S$ to Bob. 

    \item Bob measures system $S$ in basis $\{\ket0_S, \ket1_S\}$ and records the outcome $b$ in his memory $B$, similarly to Alice. 
    
    \item Ursula measures Alice's lab (consisting of $R$ and the memory $A$) in basis $\{\ket{ok}_{RA}, \ket{fail}_{RA}\}$, where
\begin{gather*}
\ket{ok}_{RA}=\sqrt{\frac{1}{2}}(\ket{0}_R\ket{0}_A-\ket{1}_R\ket{1}_A)\\
\ket{fail}_{RA}=\sqrt{\frac{1}{2}}(\ket{0}_R\ket{0}_A+\ket{1}_R\ket{1}_A).
\end{gather*}

    \item Wigner measures Bob's lab (consisting of $S$ and the memory $B$) in basis $\{\ket{ok}_{SB}, \ket{fail}_{SB}\}$, where
\begin{gather*}
\ket{ok}_{SB}=\sqrt{\frac{1}{2}}(\ket{0}_S\ket{0}_B-\ket{1}_S\ket{1}_B)\\
\ket{fail}_{SB}=\sqrt{\frac{1}{2}}(\ket{0}_S\ket{0}_B+\ket{1}_S\ket{1}_B).
\end{gather*}

    \item Ursula and Wigner compare the outcomes of their measurements. If they were both ``ok'', they halt the experiment. Otherwise, they reset the timer and all systems to the initial conditions, and repeat the experiment. 
\end{enumerate}

Does the experiment ever halt? In that case, what can Wigner conclude about the reasoning of all the other agents? 

\end{example}

It can be shown that if we follow the Born rule, this experiment will at some point halt, as the overlap of the global state at time $t=2.5$ with $\ket{ok}_{RA}\ket{ok}_{SB}$ gives us a probability of $1/12$.
If one applies the intuitive rules of classical reasoning to all agents, then we reach a contradiction:  Wigner deduces that whenever the experiment halts, Alice, at time $t=1$, had predicted with certainty that it would not.  Let us see how. 

\paragraph{Intuition behind the paradox.}
Here we sketch of the reasoning of agents leading to the paradox in  the Frauchiger-Renner thought experiment. For now we will use standard quantum theory, as seen from the outside, and naive logical inference; the formalisation and criticisms to this approach will come later. 

\begin{enumerate}[{$t= $} 1.]
    \item After Alice measures $R$  and records the result, the joint state of $R$ and her memory $A$ becomes entangled,
    \begin{gather*}
    \left(\sqrt{\frac{1}{3}}\ket{0}_R+\sqrt{\frac{2}{3}}\ket{1}_R\right)\ket{0}_A 
    \quad \longrightarrow \quad
\sqrt{\frac{1}{3}}\ket{0}_R\ket{0}_A+\sqrt{\frac{2}{3}}\ket{1}_R\ket{1}_A.
\end{gather*}
    When Alice prepares $S$, it too becomes entangled with $R$ and $A$,
     \begin{gather*}
\sqrt{\frac{1}{3}}\ket{0}_R\ket{0}_A \ket0_S+\sqrt{\frac{2}{3}}\ket{1}_R\ket{1}_A \  \frac1{\sqrt2}(\ket0_S +\ket1_S).
\end{gather*}

    \item When Bob measures $S$ and writes the result in his memory, the global state becomes
    \begin{gather*}
\ket{\psi}_{RASB} = \frac{1}{\sqrt{3}}\left(\ket{0}_R\ket{0}_A \ket0_S\ket0_B+ \ket{1}_R\ket{1}_A\ket0_S\ket0_B +\ket{1}_R\ket{1}_A\ket1_S\ket1_B \right).
\end{gather*}
    Crucially, this is a Hardy state \cite{Hardy1993}, and its terms  can be rearranged into two  convenient forms, which will be used later,
    \begin{align*}
\ket{\psi}_{RASB} &= \sqrt{\frac{2}{3}} \underbrace{\frac1{\sqrt2}\left(\ket{0}_R\ket{0}_A + \ket{1}_R\ket{1}_A \right)}_{\ket{fail}_{RA}}\ket0_S\ket0_B +
\frac{1}{\sqrt{3}}\ket{1}_R\ket{1}_A\ket1_S\ket1_B \\
&= \frac{1}{\sqrt{3}}\ket{0}_R\ket{0}_A \ket0_S\ket0_B + \sqrt{\frac{2}{3}} \ket1_R\ket1_A \underbrace{\frac1{\sqrt2}\left(\ket{0}_S\ket{0}_B + \ket{1}_S\ket{1}_B \right)}_{\ket{fail}_{SB}} .
\end{align*}
   
\item 
Now Ursula and Wigner measure Alice's and Bob's labs in their bases listed above; the possibility of them both getting the outcome ``ok'' is non-zero:
\begin{gather*}
P[u=w=ok]=|(\bra{ok}_{RA}\bra{ok}_{SB})\ket{\psi}_{RASB}|^2=\frac{1}{12}.
\end{gather*}
From now on, we post-select on this event. 
At time $t=3$, Ursula reasons about the outcome that Bob observed at $t=2$. Since 
$$\ket{\psi}_{RASB}  = \sqrt{\frac23} \ket{fail}_{RA}\ket0_S \ket0_B + \frac{1}{\sqrt{3}}\ket{1}_R\ket{1}_A\ket1_S\ket1_B,$$ 
she concludes that the only possibility with non-zero overlap with her observation of $\ket{ok}_{RA}$ is that Bob measured $\ket{1}_S$.  We can write this as ``$u=ok \implies b=1$''. 
She can further reason about what Bob, at time $t=2$ thought about Alice's outcome at time $t=1$. 
Whenever Bob observes $\ket1_S$, he can use the  same form of $\ket{\psi}_{RASB}$ to conclude that Alice must have measured $\ket1_R$. We can write this as ``$b=1 \implies a=1$''.
Finally, we can think about Alice's deduction about Wigner's outcome. Using the form 
$$\ket{\psi}_{RASB} =  \frac{1}{\sqrt{3}}\ket{0}_R\ket{0}_A \ket0_S\ket0_B + \sqrt{\frac{2}{3}} \ket1_R\ket1_A \ket{fail}_{SB},$$
we see that Alice reasons that, whenever she finds $R$ in state $\ket1_R$, then Wigner will obtain outcome ``fail'' when he measures Bob's lab. That is, ``$a=1 \implies w=fail$''.

\end{enumerate}

Thus, chaining together the statements (the same reasoning that allowed the reader to solve the three hats problem), we reach an apparent contradiction: 
$$w=u=ok \implies b=1\implies a=1\implies w=fail.$$ 
That is, 
when the experiments stops with $u=w=ok$, the agents  can make \emph{deterministic}  statements about each other's reasoning and measurement results, concluding that Alice had predicted $w=fail$. 

We notice that we are confused. It follows that some of the assumptions that we implicitly used to derive the contradiction must be inadequate in this setting --- or simply incompatible with each other. In the next section we will first review the original attempt at listing these assumptions, and then we will show that it was still incomplete.

\section{Analysis of assumptions in the Frauchiger-Renner result}
\label{sec:reformulation}

In their original article, Frauchiger and Renner \cite{Frauchiger2018} use the above thought experiment to formulate a no-go theorem under three assumptions. These assumptions correspond in spirit to: compatibility with the Born rule of quantum theory \textbf{Q}, logical consistency among agents \textbf{C}, and experimenters having the subjective experience of only seeing one outcome \textbf{S}.
The paradox is formulated in terms of statements different agents make using the assumptions. It is assumed that all  agents employ the same theory $T$ for reasoning, and the assumptions capture the essence of this theory. We present them here as stated in \cite{Frauchiger2018}.\footnote{The exact formalization of the assumptions is not too important for the conceptual discussion of this work, and it suffices for the reader to understand the intuition behind them.}

The first assumption encapsulates the idea of the observers in the experiment ``using quantum theory''; to be precise, they are using not the ``full version'' of quantum mechanics, but rather a weak version of the Born rule applicable only to deterministic situations.
\begin{assumption}[Q \cite{Frauchiger2018}]
A theory $T$ that satisfies \textbf{Q} allows any agent Alice to reason as follows.
Let $S$ be an arbitrary system around Alice, $\psi$ a unit vector of its Hilbert space, and $\{\pi_x^H\}_{x \in X}$ a family of positive operators on this space such that  $\sum_x \pi_x^H=id$, and  $\bra{\psi} \pi_\xi^H  \ket{\psi}=1$ for some $\xi\in X$. 
Suppose that Alice has established the statements
``System $S$ is in state $\psi$ at time $t_0$.'' and  ``The value $x$ is obtained by a measurement of $S$ w.r.t. the family $\{\pi_x^H\}$ (in the Heisenberg picture, relative to time $t_0$). The measurement is complete at time $t$.''  Then Alice can conclude ``I am certain at time $t_0$ that $x=\xi$ at time $t$.''
\end{assumption}

The second assumption governs how agents  reason from the viewpoint of each other.
\begin{assumption}[C \cite{Frauchiger2018}]
A theory $T$ that satisfies \textbf{C} allows any agent Alice to reason as follows. If Alice has established  ``I am certain at time $t_0$ that agent $B$, upon reasoning using $T$, is certain that $x=\xi$ at time $t$'' where $x$ is a value that can be observed at time $t > t_0$, then Alice can conclude
``I am certain at time $t_0$ that $x=\xi$ at time $t$.''
\end{assumption}

The third assumption captures the idea of each agent experiencing only single outcome of a measurement. For example, if Alice is going to measure $R$, she is not allowed to state both ``I am certain that I will observe outcome 0'' and ``I am certain that I will observe outcome 1.''
In the language of the many-worlds interpretation, this corresponds to associating the agent's experience with a single ``branch'' of the global wave function (in the measurement basis), and not with the superposition of the different branches.
\begin{assumption}[S \cite{Frauchiger2018}]
A theory $T$ satisfies \textbf{S} if it disallows any agent Alice to make both the statements ``I am certain at time $t_0$ that $x=\xi$ at time $t$.'' and
``I am certain at time $t_0$ that $x\neq\xi$ at time $t$'' where $x$ is a value that can be observed at time $t > t_0$. 
\end{assumption}

The claim by Frauchiger and Renner is then that the three assumptions are incompatible, and the proposed proof goes via an analysis of the thought-experiment. 

\begin{claim}[Frauchiger-Renner \cite{Frauchiger2018}]
No physical theory $T$ can simultaneously satisfy the assumptions \textbf{Q}, \textbf{C} and \textbf{S}.
\end{claim}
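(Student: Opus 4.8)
The plan is to prove the Frauchiger--Renner claim by exhibiting the concrete chain of certainty-statements already sketched in the ``Intuition behind the paradox'' discussion, but now justifying each link rigorously using exactly the three assumptions \textbf{Q}, \textbf{C} and \textbf{S}. First I would fix the setup: condition on the halting event $u=w=ok$, which the Born rule guarantees occurs with probability $1/12>0$, so that the scenario I reason about is non-empty. The strategy is to let each agent apply assumption \textbf{Q} locally to the global Hardy state $\ket{\psi}_{RASB}$, written in whichever of its two convenient forms makes the relevant operator have a deterministic outcome, and then use assumption \textbf{C} to let Wigner \emph{inherit} each agent's certainty as his own.

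Concretely, I would carry out the following steps in order. Step one: Ursula, having observed $\ket{ok}_{RA}$, applies \textbf{Q} to the form $\ket{\psi}_{RASB}=\sqrt{2/3}\,\ket{fail}_{RA}\ket0_S\ket0_B+\tfrac{1}{\sqrt3}\ket1_R\ket1_A\ket1_S\ket1_B$; since $\ket{ok}_{RA}$ is orthogonal to $\ket{fail}_{RA}$, the only surviving branch forces $b=1$ with certainty, giving ``$u=ok\implies b=1$''. Step two: Bob, conditioned on $b=1$, applies \textbf{Q} to the same state to deduce ``$b=1\implies a=1$''. Step three: Alice, conditioned on $a=1$, applies \textbf{Q} to the second form $\ket{\psi}_{RASB}=\tfrac{1}{\sqrt3}\ket0_R\ket0_A\ket0_S\ket0_B+\sqrt{2/3}\,\ket1_R\ket1_A\ket{fail}_{SB}$ to deduce ``$a=1\implies w=fail$''. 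Step four: I would chain these nested certainty statements, repeatedly invoking \textbf{C} so that each agent's certainty about the next agent's certainty collapses into Wigner's own certainty. This yields that Wigner, upon observing $w=ok$, is certain that $w=fail$, i.e.\ he holds both ``$w=fail$'' and (from his direct observation) ``$w\neq fail$'', contradicting \textbf{S}. Hence no theory $T$ can satisfy all three assumptions at once.

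I would organise the chain explicitly as nested statements of the form ``Wigner is certain that Ursula is certain that Bob is certain that Alice is certain that $w=fail$'', and then peel off the outer layers one at a time: each application of \textbf{C} removes one ``is certain that agent $X$ is certain'' wrapper, converting it into Wigner's direct certainty. It is important to check that the hypotheses of \textbf{C} are genuinely met at each peeling --- in particular that the inner certainty is about a value observable at a later time than the outer agent's reasoning time, so the temporal ordering $t_0<t$ required by the assumption is respected along the whole backward chain from $t=3$ down to $t=1$.

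The main obstacle, and the subtle point this paper later exposes, is justifying the individual applications of \textbf{Q}: the assumption as stated requires that at the reasoning time the system be in a definite state $\psi$ \emph{and} that the measurement outcome be deterministic in the Heisenberg picture. When an agent's lab is subsequently measured in an incompatible basis (as Ursula and Wigner do to Alice's and Bob's labs), the inner agents' earlier measurements are not simply ``frozen'' records but are acted upon unitarily, so whether Bob's or Alice's conditional states can legitimately be fed into \textbf{Q} is exactly where a hidden assumption about unitary, reversible evolution of the other labs must be smuggled in. I expect that making the chain airtight forces one to state precisely that all intervening evolution is unitary, and it is this tacit premise whose absence from the original formulation the paper goes on to identify.
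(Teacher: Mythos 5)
Your proposal reproduces the original Frauchiger--Renner argument faithfully, and in your final paragraph you put your finger on exactly the right weak point --- but you should be aware that this paper's verdict is that the weak point is fatal to the claim \emph{as stated}, not merely a presentational subtlety to be tidied up. The paper does not prove this claim; its central contribution is to argue that the claim does not follow from \textbf{Q}, \textbf{C} and \textbf{S} alone. Each of your steps one through three applies \textbf{Q} to the pure Hardy state $\ket{\psi}_{RASB}$, but \textbf{Q} requires the reasoning agent to have \emph{established} that the relevant system is in that unit vector, and this is only legitimate if she models the other agents' measurements as isometries. The paper exhibits a concrete counter-model: if Alice, having observed $a=1$, models Bob's measurement as a completely positive trace-preserving (collapse) map, the post-measurement state of $SB$ is the mixture $\frac12 \pure{ok}_{SB} + \frac12 \pure{fail}_{SB}$, so she computes $P[w=ok \mid a=1] = \frac12$ and cannot conclude ``$a=1 \implies w=fail$'' with certainty. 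Such a theory satisfies \textbf{Q}, \textbf{C} and \textbf{S} (the paper cites collapse theories, and Sudbery's Bohmian analysis, as instances) yet yields no contradiction, so the three assumptions are not demonstrably incompatible.

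Consequently your chain cannot be made airtight without promoting the ``smuggled'' premise to an explicit fourth assumption \textbf{U} (all agents model each other's measurements as reversible evolutions), which is precisely what the paper does before asserting that ``the proof of the original paper applies'' to the reformulated claim with \textbf{Q}, \textbf{C}, \textbf{S} and \textbf{U}. So the honest conclusion of your own analysis is that you have a proof of the reformulated theorem, not of the statement you were asked to prove; for the statement as given, the correct answer is that the proof attempt fails at the applications of \textbf{Q}, and the paper supplies an explicit model showing why no repair within \textbf{Q}, \textbf{C}, \textbf{S} is available.
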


In this work, we note that there is an additional implicit assumption used in the proof \cite{Frauchiger2018}: all agents are considering the evolution of another agents in their labs unitary. In other words, the observers, when reasoning about statements other observers make, do it according to an assumption of a specific type of evolution happening in the labs. This assumption cannot, to the best of our knowledge, be derived from the other three. Indeed, the proof in the original paper uses the fact that  agents describe the evolution of each other's labs after local measurements through isometries. For example, unitarity is necessary for Ursula  to derive even the first implication ``$u=ok\implies b=1$,'' as well as for Alice to derive ``$a=1 \implies w=fail$.'' We will illustrate this with a counter-example, but before we proceed, allow us to lay out the missing assumption.

\begin{assumption}[U]
A theory $T$ that satisfies \textbf{U}
allows any agent $A$ to model measurements performed by any other agent $B$ as reversible evolutions in $B$'s lab --- for example, a unitary evolution $U_{BS}$ of the joint state of $B$'s memory and the system $S$ measured.
\end{assumption}

The exact formalisation of \textbf{U} is not important for now; let us instead see why we need it.
Suppose for example that Alice obtains the outcome $a=1$ and describes the measurement process in Bob's lab as completely positive trace-preserving map. 
She could model the global state before Bob's measurement as
$$\ket{1}_R\ket{1}_A \frac{1}{\sqrt2} (\ket0_S + \ket1_S) \ket0_B,$$
and after Bob's measurement, as a density matrix
\begin{align*}
    \rho_{RASB} &= \pure{1}_R \otimes \pure1_A \otimes \left(\frac12 \pure0_S \otimes \pure0_B + \frac12 \pure1_S \otimes \pure1_B \right) \\
    &= \pure{1}_R \otimes \pure1_A \otimes \left(\frac12 \pure{ok}_{SB} + \frac12 \pure{fail}_{SB}\right)
\end{align*}
Consequently, the probability of Wigner getting the outcome ``$w=ok$'' would be, from Alice's perspective,
\begin{gather*}
P[ w=ok \vert a=1]=tr[\rho_{SB}  \ \pure{ok}_{SB}]=\frac{1}{2}.
\end{gather*}
Thus, in this case the same reasoning used in the proof of the original article is not applicable, which highlights the importance of the implicit assumption of a specific type of evolution happening in the labs.
We may now complete the initial claim of contradiction as follows. 
The proof of the original paper~\cite{Frauchiger2018} applies. 

\begin{claim}[Reformulating the Frauchiger-Renner theorem]
No physical theory $T$ can govern the reasoning of all agents while simultaneously satisfying the assumptions \textbf{Q}, \textbf{C},  \textbf{S} and \textbf{U}. 
\end{claim}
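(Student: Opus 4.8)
The plan is to run the thought experiment explicitly, post-select on the halting event $u=w=ok$, and expose at each inference precisely which of the four assumptions is invoked --- in particular the three places where \textbf{U} is silently required. Assuming for contradiction a theory $T$ satisfying \textbf{Q}, \textbf{C}, \textbf{S} and \textbf{U}, the target is the chain of certainty statements sketched in the introduction,
\[
w=u=ok \implies b=1 \implies a=1 \implies w=fail,
\]
whose endpoint $w=fail$ will contradict the post-selected fact $w=ok$ and so violate \textbf{S}.

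First I would fix the global state at $t=2$ as the Hardy state $\ket{\psi}_{RASB}$ and confirm, via the Born rule, that $P[u=w=ok]=\tfrac{1}{12}>0$, so the halting event occurs with positive probability in each round and almost surely over repetitions; we condition on it. Each link in the chain is then an application of \textbf{Q} that becomes \emph{licensed} only once \textbf{U} has been used to model the relevant earlier measurement as a reversible (unitary) evolution rather than a collapse. Concretely, for Ursula's step $u=ok\implies b=1$ she must describe Bob's $t=2$ measurement as a unitary copying $S$ into $B$, keeping the global state the pure $\ket{\psi}_{RASB}$; projecting onto $\ket{ok}_{RA}$ leaves $SB$ in $\ket1_S\ket1_B$, and \textbf{Q} yields $b=1$ with certainty. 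The step $b=1\implies a=1$ treats Alice's $t=1$ measurement unitarily in the same way, and Ursula imports Bob's conclusion through \textbf{C}. For $a=1\implies w=fail$, Alice describes Bob's measurement unitarily so that $SB$ sits in $\ket{fail}_{SB}$, whence \textbf{Q} gives $w=fail$ with certainty, a statement Wigner adopts via \textbf{C}.

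The main obstacle --- and the entire point of the reformulation --- is to verify that \textbf{U} is genuinely indispensable, i.e. that dropping it breaks the chain and hence that \textbf{U} is not derivable from \textbf{Q}, \textbf{C}, \textbf{S}. I would make this precise by redoing the step $a=1\implies w=fail$ with Bob's measurement modelled instead as a completely positive trace-preserving (decohering) map, exactly as in the counter-example above: the state of $SB$ becomes the mixture $\tfrac12\left(\pure{ok}_{SB}+\pure{fail}_{SB}\right)$, so that $\tr[\rho_{SB}\,\pure{ok}_{SB}]=\tfrac12$ and no deterministic implication survives. An entirely analogous computation shows Ursula's first link also collapses to probability $\tfrac12$ without \textbf{U}, confirming that the three assumptions \textbf{Q}, \textbf{C}, \textbf{S} alone are insufficient.

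With \textbf{U} reinstated, all three links hold and compose: the post-selected agent becomes certain of $w=fail$ while also holding the observed certainty $w=ok$, i.e. both ``certain that $w=ok$'' and ``certain that $w\neq ok$,'' which \textbf{S} forbids. This is the desired contradiction, so no $T$ can satisfy all four assumptions. The remaining bookkeeping --- the exact nesting of ``$A$ is certain that $B$ is certain that\,\ldots'' statements and their flattening by repeated use of \textbf{C} --- is precisely that of the original proof of \cite{Frauchiger2018}, which now applies verbatim once the unitarity assumption is made explicit.
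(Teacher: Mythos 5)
Your proposal is correct and follows essentially the same route as the paper: the paper's own justification is the chain $u=ok\implies b=1\implies a=1\implies w=fail$ from its introductory exposition (each link licensed by \textbf{Q} under the unitary modelling \textbf{U} and transferred via \textbf{C}, closing against \textbf{S}), together with the identical CPTP counter-example showing \textbf{U} is indispensable, and a final appeal to the original Frauchiger-Renner proof for the formal bookkeeping. Your write-up merely makes explicit where each assumption enters, which matches the paper's intent.
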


One way to contradict assumption  \textbf{U} is by saying that unitary quantum mechanics does not apply in the regime of large bodies; this is the case of fundamental collapse theories \cite{Ghirardi1985, Ghirardi1986, Bassi2003}, for example. As these theories give different predictions to quantum mechanics, they are falsifiable. Alternatively, one could postulate that the evolution of large systems is unitary, but that there is fundamentally no way to measure them --- which is in principle falsifiable as the precision of  mesoscopic experiments improves. 
We will see further ways around \textbf U later on.
Unashamedly pedantic researchers such as the authors could point out other assumptions that are implicitly included in the common knowledge of all agents involved, for example:
\begin{itemize}
    \item[{\bf P.}]  It is possible for agents to prepare systems in \emph{pure states}, and to measure them according to perfect projectors. It is also possible to perform these measurements in isolation. 
    
    If we drop this assumption, and claim that realistically, agents can only prepare approximate states and perform noisy measurements, we don't obtain a deterministic contradiction. Nevertheless, we would still obtain a discrepancy in the predictions made by the different agents in this thought experiment.
    That is, we would still obtain a conceptual problem, even if it were not as neatly formalized as in the Frauchiger-Renner theorem. In order to formalize the contradiction, we would have to relax the version of the Born rule used here to cover frequentist or probabilistic cases. 
    We could also consider explicit reference frames for the measurements; again, the conceptual contradiction doesn't disappear.

    \item[{\bf M.}] Agent's \emph{memories} are ultimately physical systems. In particular, they are quantum systems, and different classical statements are encoded as orthogonal quantum states. (For example, Alice's statements ``I saw outcome 0'' and ``I saw outcome 1'' can be stored in her memory $A$ as states $\ket0_A$ and $\ket1_A$ respectively, with $\inprod01_A =0$.)  It is in principle possible to perform quantum measurements on these memories. 
    
    This is the approach of physicalism (``mind is matter'', most eloquently described in \href{https://youtu.be/_9Rs61l8MyY?t=35m55s}{this talk}); alternatively 
    we could imagine an interpretation of quantum theory that demands  that agents' thoughts cannot be encoded in physical systems, thus preventing us from meaningfully measuring observers.\footnote{If the reader finds this a satisfactory way out, we have \href{https://en.wikipedia.org/wiki/Ghost_in_the_machine}{different tastes} and further discussion would not be productive.}
    Even with such a theory, the unitarity assumption \textbf U may suffice to derive a contradiction (after adjusting the definition of subsystems $A$ and $B$ measured by Ursula and Wigner), so physicalism may not be necessary to find the contradiction. It may be interesting to investigate whether \textbf U implies a weak version of physicalism.

    \item[{\bf A.}]  Each agent considers the perspective of other \emph{agents} as valid as their own. 
    
    In fact, to derive the contradiction it is not strictly needed that agents in the Frauchiger-Renner experiment think of all the other agents as capable of reasoning. For example, Ursula may see Alice as simply a quantum system that she measures. She will however take Bob seriously as an agent, and therefore trust his reasoning. Ursula further knows that Bob sees Alice as an agent. Since Bob trusts Alice's reasoning, he will draw conclusions from it (``$b=1 \implies a=1 \implies w= fail$''). Ursula then only needs to trust Bob's conclusion, regardless of his source (``$u=ok \implies b=1 \implies w= fail$''). In that sense, agents don't need to directly assign agency to those whom they are measuring. 
    
    This weaker version of \textbf A roughly corresponds to the consistency assumption \textbf C. One could debate whether it is this specific assumption that is violated by agent-centric approaches to quantum theory such as QBism \cite{Caves2002, Fuchs2014, Fuchs2016}. 
    
    \item[{\bf X.}] There are no \emph{extraordinary} interventions on an experimental setting such as the one used in the proof (once all agents agree on it).
    
    This assumption is certainly needed in order to derive the paradox (otherwise the agents could not even describe the setting). However, given that most interpretations of quantum mechanics do not forbid \textbf X, it is not a very helpful assumption to help us categorize different versions of quantum theory. The same holds for assumption \textbf P above.

\end{itemize}


\subsection{Interpretations and reactions}
\label{sec:interpretations}
 Having identified the missing assumption needed to derive a contradiction, it is possible to classify theories and interpretations of quantum mechanics in terms of which assumptions they violate. This allows us to clarify some of the analysis of the original paper~\cite{Frauchiger2018}, in particular for the case of theories that violate \textbf U.

Before we proceed, we note that just because we found four assumptions that allow us to derive a contradiction, it does not mean that this is a useful partition of properties of physical and logical theories. That is, some theories and interpretations will not fit neatly into the boxes we drew for them --- we may find for example that they violate more than one assumption. As an analogy, there are many ways to split a large system into subsystems, and some will be more operational than others. Similarly, there are several ways to split the contradiction into modular assumptions, and while the current classification may appear intuitive, it is not necessarily the best in the long run. 

The list of theories discussed here is not exhaustive, and we focus on reactions to the original paper. 
For the interested reader, the theories below   are introduced in more detail in Appendix~\ref{appendix:interpret}. Further interpretations are discussed in \cite{Frauchiger2018}.

\paragraph{Bohmian theory.}

In \cite{Sudbery2016} Anthony Sudbery applies Bohmian theory to the Frauchiger-Renner experiment.
The author points out that the theory satisfies \textbf Q, \textbf C and \textbf S. 
He then argues that the analysis of the experiment according to the proof of Frauchiger and Renner assumes a particular type of evolution that does not reflect Bohmian theory. 
In Bohmian theory we associate nature with two states: {\bf(1)} the \emph{pilot vector} $\ket{\Psi(t)}$, which corresponds roughly to the global quantum state as seen from the outside, and which evolves unitarily (for example, $\ket{\Psi(t)} = \frac1{\sqrt2} (\ket0_R\ket0_A + \ket1_R\ket1_A))$; and {\bf(2)} the \emph{real state vector} $\ket{\Phi(t)}$,   which corresponds to the ``collapsed'' state after a measurement outcome, and to the experiences of the agents considered (for example $\ket{\Phi(t)} = \ket1_R\ket1_A$). Crucially, the real state need not evolve unitarily: $\ket{\Phi(t)}$ can be picked at every time instant from the pilot $\ket{\Psi(t)}$, more independently from $\ket{\Phi(t-1)}$ than in collapse theories (for  consistency rules between the real vectors over time see \cite{Sudbery2016}).  Even if Alice sees outcome $1$, corresponding to the real state $\ket{\Phi(t=1)}=\ket{1}_R\ket{1}_A \frac{1}{\sqrt2} (\ket0_S + \ket1_S) \ket0_B $, she also keeps track of the global pilot state, and evolves the latter. From the pilot  $\ket{\Psi(t=4)}$ she can extract all the possible compatible real states, of which one, $\ket{\Phi(t=4)}$, allows her to see $a=1$ while Ursula and Wigner both get $ok$; therefore, Alice would not make the prediction ``$a=1 \implies w=fail$'' (for the calculation, see \cite{Sudbery2016}). Note that if Alice had only evolved the real state without consideration for the pilot, or if she had imposed more consistency over time for the real vector, she would not be able to see this. 
We conclude that Bohmian mechanics violates assumption \textbf U, and this example can help us work out a precise formalization for \textbf U. Incidentally, this analysis suggests that pre- and  post-selection have little consequence in Bohmian theory, and it would be interesting to investigate the consequences of this.

\paragraph{Copenhagen interpretations.}
Jeffrey Bub, in his recent work \emph{Why Bohr was (mostly) right} \cite{Bub2017}, gives an interesting insight on the paradox based on Bohr's work. He operates with a notion of a Boolean frame, which is simply a classical frame, determined by the position of the Heisenberg cut. He states:
\begin{displayquote}
Once you decide what counts as the ultimate measuring instrument in a given scenario, the cut is no longer movable. If you move it, you are ``subdividing the phenomenon,'' as Bohr puts it, and the two analyses will be incompatible --- in general they won't both fit into one Boolean frame. In effect, a phenomenon is an episode from the perspective of a Boolean frame.
\end{displayquote}

Bub then argues that only the single-observer perspective is useful in quantum mechanics, and if one takes a point of view of a certain observer in the Frauchiger-Renner experiment, the measurement outcome events for all other observers are neglected; thus, no contradiction arises, as assumption \textbf C is not satisfied. 
In other words, since agents move the cut individually, one has to choose and fix a single  observer in the experiment, who makes the ultimate decision about the cut, and whose perspective is the only valid one. 
This is an example of reducing the scope of a theory until there are no mathematical contradictions left.\footnote{As physicists, we would rather explore apparent inconsistencies, and enrich the scope of the theory.}

\paragraph{QBism.}
QBism \cite{Caves2002, Fuchs2014, Fuchs2016} also restricts itself to a single-agent theory; it deals with experiences and actions of one specific agent, and thus it is not clear if the thought experiment can be analyzed within QBism without  further generalizing the theory. For the further discussion we refer to the Frauchiger-Renner paper \cite{Frauchiger2018}.

\paragraph{Unitary theories: many-worlds and relative state.}
In many-worlds interpretations \cite{Everett1957, ManyworldsStanford}, assumption \textbf U is a part of the theory (as the state of the universe evolves unitarily), while  assumptions \textbf S and \textbf C can be questioned. If the theory is formulated in a way that models agents to be able to experience more than one outcome at once, then the assumption  \textbf S is violated. 
Additionally, if the agents do not have reliable memories, then the consistency assumption \textbf C might not hold. While applying assumption \textbf C, agents put their trust in the results of other agents; this may not hold, for example, in versions of many-worlds where the observer is not associated with a particular branch.

In the article \emph{On formalisms and interpretations} \cite{Baumann2018}, Veronika Baumann and Stefan Wolf already point out that different models for the global evolution lead to different predictions in experiments where agents are measured. The authors provide  an extension of the Born rule within a unitary ``relative-state formalism,''  to compute probabilities of outcomes in settings with multiple agents  (summary in Appendix~\ref{appendix:interpret}). Finally, they show that if agents model the evolution of each other through collapse theories, the contradiction does not arise.

\section{Classical modal logic}
\label{sec:classical}
In this section we review classical modal logic, which allows us to model and solve complex scenarios where several agents think about each other's reasoning (such as the three hats puzzle). 
We then discuss which  axioms of classical modal logic can be applied to fully quantum settings, and how they relate to the assumptions in the Frauchiger-Renner result.
For conciseness, we stick to the bare bones of the modal logic framework here \cite{Lewis1918, Lewis1959, Carnap1988, Goldblatt2006, Kripke2007, Fagin2004}. For more details on motivation, development and related work, see Appendix~\ref{appendix:modal}. Note: the branch of modal logic applied to reasoning about knowledge is sometimes called epistemic logic, and modal logic in multiple agent settings is sometimes called multi-modal logic.

\paragraph{Modal logic syntax.}
Modal logic uses a number of standard logical operators (also known as connectives), such as $\neg$ for `not', $\Rightarrow$ for `if...then', $\wedge$ for `and', $\vee$ for `or' and $\Leftrightarrow$ for `equivalent.'
So how do we apply this syntax to the statements that agents make? Suppose we have a group of $n$ agents ($1,2,...,n$), who are able to describe the world in terms of primitive propositions $\phi_1,\phi_2,...$ belonging to a set $\Phi$. The primitive propositions are simple facts about the world, for example, $\phi_1=$``The dragon lives under the Lonely Mountain'' or $\phi_2=$ ``Minas Tirith is white.''
Then to express the statement like ``Bilbo knows the dragon lives under the Lonely Mountain,'' we introduce modal operators $K_1,K_2,...,K_n$, one for each agent; in that case, $K_i\phi$ corresponds to ``Agent $i$ knows $\phi$.'' 
Thus, the statements produced by the agents, including those concerning the other agents' views, can be written in a compact way.
This is especially convenient when the statements are linguistically complex: for example, the statement ``Aragorn knows that Bilbo doesn't know that Elrond knows that the dragon lives under the Lonely Mountain''  can compressed into $K_A\neg K_B K_E\phi_1$.

\paragraph{Possible worlds semantics and Kripke structures.}
In modal logic, a set $\Sigma$ of possible worlds is introduced. The truth value of a proposition $\phi$ is then assigned depending on the possible world in $\Sigma$, and can differ from one possible world to another.
We will first provide a structure which serves as a complete picture of the setup the agents are in, and then discuss the elements of the structure. 
\begin{definition}{\textbf{(Kripke structure)}}
A Kripke structure $M$ for $n$ agents over a set of statements $\Phi$ is a tuple $\langle \Sigma, \pi,\mathcal{K}_1,...,\mathcal{K}_n\rangle$ where $\Sigma$ is a non-empty set of states, or possible worlds, $\pi$ is an interpretation, and $\mathcal{K}_i$ is a binary relation on $\Sigma$.

The interpretation $\pi$ is a map $\pi:\Sigma\times\Phi\to\{\textbf{true},\textbf{false}\}$, which defines a truth value of a statement $\phi\in\Phi$ in a possible world $s\in\Sigma$.

$\mathcal{K}_i$ is a binary equivalence relation on a set of states $\Sigma$, where $(s,t)\in\mathcal{K}_i$ if agent $i$ considers world $t$ possible given his information in the world $s$. 
\end{definition}

The truth assignment  tells  us if the proposition $\phi\in\Phi$ is true or false in a possible world $s\in\Sigma$; for example, if $\phi=$ ``Minas Tirith is white,'' and $s$ is a world where the Middle-earth is real and Minas Tirith still stands, then $\pi(s,\phi)=\textbf{true}$. The truth value of a statement in a given structure $M$ might vary from one possible world to another; we will denote that $\phi$ is true in world $s$ of a structure $M$ by $(M,s)\models\phi$, and $\models\phi$ will mean that $\phi$ is true in any world $s$ of a structure $M$.

Agents do not possess the complete information about a possible world they are in, and may consider other possible worlds possible; for example, if the agent doesn't know if it is raining in the Shire, she can consider both the world where it is indeed raining in Shire, and the world where it doesn't, possible. This situation is captured by binary relations $\mathcal{K}_i$.
Formally, we say that agent $i$ ``knows'' $\phi$ in a world $s$, that is $(M,s) \models K_i \phi $, if and only if for all possible worlds $t$ such that $(s,t) \in \mathcal K_i$ (that is, all the worlds admitted by the agent given their knowledge), it holds that $(M,t) \models \phi$. 

\paragraph{Axioms of knowledge.}
In order to operate the statements agents produce, we have to establish certain rules which are used to compress or judge the statements. These are the axioms of knowledge \cite{LogicStanford}. They might seem trivial in the light of our everyday reasoning, yet given our awareness of the quantum case, we will treat them carefully.

\begin{axiom}[Distribution axiom.]
If an agent is aware of a fact $\phi$ and that a fact $\psi$ follows from $\phi$, then the agent can conclude that $\psi$ holds:
$$(M,s)\models(K_i\phi\wedge K_i(\phi\Rightarrow\psi))\Rightarrow (M,s)\models K_i\psi.$$
\end{axiom}
\begin{remark}
A reasonable implication of the distribution axiom would be that if an agent is aware that a fact $\phi$ follows from $\chi$ and $\psi$ follows from $\phi$, then he can conclude that $\psi$ follows from $\chi$:
$$(M,s)\models(K_i(\chi\Rightarrow\phi)\wedge K_i(\phi\Rightarrow\psi))\Rightarrow (M,s)\models K_i(\chi\Rightarrow\psi).$$
\end{remark}
Note that this is weaker than saying that agents know all logical consequences of their knowledge. 
For example,  it follows from the axiom that ``if Bilbo knows that it is raining in the Shire, and he knows that grass grows when it rains, then he knows that grass is growing in Shire.'' An agent less versed in botany would not be able to reach the same conclusion. 

\begin{axiom}[Knowledge generalization rule.]
All agents know all the propositions that are  valid in a structure:
\begin{center}
if $(M,s)\models\phi \ \forall s$ then $\models K_i\phi \ \forall i.$
\end{center}
\end{axiom}
An example would be ``if it always rains in the Shire, then all agents know that it rains in the Shire.''
The above appears to be a rather strong assumption, but it can be seen as a condition on what is common knowledge in a given Kripke structure.  That is, if we do not want ``rainy Shire'' to be common knowledge among a set of agents, we just need to add one possibly world $s$ where it is sunny in the Shire to the Kripke structure that models the agents and their knowledge. 
Examples of facts usually included as common knowledge in typical Kripke structures are physical laws, like ``objects fall'' or ``one day follows the other.'' We will see that this can include ``quantum theory holds.'' This axiom is sometimes called the necessitation rule. 

\begin{axiom}[Truth axiom.]
If an agent knows a fact then the fact is true,
$$(M,s)\models K_i\phi\Rightarrow (M,s)\models\phi.$$
\end{axiom}
The truth axiom is taken by many philosophers to be the major property distinguishing knowledge from belief.
It follows from the previous two axioms together that if an agents knows $\phi$ to hold in all possible worlds, then all agents know $\phi$ to be true. Not surprisingly, this will be problematic in quantum settings, where measurement outcomes and conclusions can be subjective. We will discuss this in the next section.

\begin{axiom}[Positive and negative introspection axioms.] 
Agents can perform  introspection regarding their knowledge:
\begin{center}
$(M,s)\models K_i\phi\Rightarrow (M,s)\models K_iK_i\phi$ (Positive Introspection),\\
$(M,s)\models \neg K_i\phi\Rightarrow (M,s)\models K_i\neg K_i\phi$ (Negative Introspection).
\end{center}
\end{axiom}
In other words, agents are aware of the limitations of their own knowledge. If I know $\phi$, then I know that I know $\phi$; if I don't know $\phi$, I am aware of not knowing $\phi$.  This uncompromising rationality is lacking in many agents in the real world: take for example $\phi=$``god exists'', ``this policy has good long-term effects'' or ``my friend is trustworthy.'' In the quantum case, it will be tricky for different reasons.

\subsection{Problems in quantum settings}

\label{sec:quantumproblems}

Let us now try to apply the semantics of Kripke structures to fully quantum settings, and discuss the problems that arise.

\paragraph{Scope of ``agent.''}
The largest problem that affects the consistency of the logic axioms in quantum settings is the very definition of an agent, which is taken as an abstract primitive in classical modal logic. Quantum theory forces us to associate agents with a physical (and therefore quantum) memory, as opposed to an abstract ``perspective'' and reasoning independent of a physical substrate. 
Then, for example (in a unitary  version of quantum theory), after Alice measures $R$, do we associate her with her subjective experience of a single outcome, or with her quantum memory $A$, which is entangled with $R$ and cannot be mapped to a single outcome? 
More generally, what are the constraints that we can impose on the evolution of that memory in order to say that they are the same agent? For example, most people would agree that their physical brains and bodies form their perception of self, and
if their memory was completely erased and reprogrammed, they would not be the same person. In our thought experiment, after Ursula measures Alice, thus significantly altering the state of her memory $A$, is it fair to say that pre-measurement Alice is the same agent as post-measurement Alice? We treat this issue in Section~\ref{sec:suggestions}. 

\paragraph{Possible worlds semantics.} In classical Kripke structures, we have possible worlds or states, in which the truth value of propositions can be evaluated. It is ambiguous how to adapt this semantics of ``possible worlds'' to settings of multiple quantum agents, and unclear if one can do it at all without running into contradictions. One naive approach (which we use at the end of this section to reproduce the Frauchiger-Renner paradox) is to associate each possible world with one set of outcomes of the final measurements of an experiment, that is those outcomes that can be considered ``classical'' for practical purposes, because the memories in which they are recorded will not be further modified. In our example, this will correspond to a possible world for each pair of outcomes $(u,w)$ of Ursula and Wigner. This set of possible worlds will allow us to assess the truth value of the propositions needed for the paradox, but it does not let us determine the truth of arbitrary statements, like ``if Alice measured in a different basis, she would obtain outcome 0.'' These statements may indeed be undefined in the Kripke structure that describes the experiment. 
Our intuition is that the very notions of possible worlds and truth assignments are too classical in nature, and that this is where we need to start to change the modal logic framework in order to generalize it to quantum settings.

\paragraph{Common knowledge and physical laws.}
As we saw, the knowledge generalization rule is a constraint on what is common knowledge, and therefore should apply to the general statements  that all agents can make about the setup and the course of the experiment.
In the Frauchiger-Renner setting, this axiom covers the assumptions \textbf{Q} and \textbf{U}, which are taken as physical laws accepted by all agents. Note however that if two agents use different versions of quantum theory (for example Alice uses an interpretation with fundamental collapse whereas Bob follows unitary quantum theory), then  assumptions like  \textbf{U} may not be common knowledge, and may therefore not be always true in the Kripke structure modelling those agents.  

\paragraph{Introspection, single outcome and context.}
\label{par:intr&cont}
The introspection axioms are not directly applied in the Frauchiger-Renner setting, but they do  resemble assumption \textbf{S}: if an agent can only perceive one outcome of a measurement, she is able to give a certain answer when asked about the outcome, which essentially means that she is sure about what she knows, ``I know (at some time $t'$) that I observed outcome $a=1$ at time $t=1$, and I know (at $t'$) that I did not observe $a=0$ at time $t=1$.''  Indeed, in modal logic  \textbf{S} can be written as
$$K_{i,t'} K_{i, t} (x_i=x')  \implies K_{i,t'} K_{i,t} \ \neg (x_i\neq x'), \quad \forall \ t'$$
whenever $x_i$ is the outcome of a quantum measurement performed by agent $i$ at time $t$. 
But if we consider an agent who applies unitary quantum theory to the whole lab, including herself, then her introspection depends on whether she wants to talk about her ``local'' perspective of the state in the lab after experiencing a given outcome, or the  ``global '' state of the lab as seen from the outside. Alice, for example, may measure $R$ and then say ``locally I observed outcome $a=1$, but I know that globally what happened is that my experience of the outcome became entangled with $R$, that is my memory and $R$ are in the joint state $\sqrt{\frac{1}{3}}\ket{0}_R\ket{0}_A+\sqrt{\frac{2}{3}}\ket{1}_R\ket{1}_A$.'' Thus, Alice  could first choose  a \emph{context} (local to her perspective or global), appropriate to the situation in which she is making a statement, and only then produce the statement. If Alice keeps the global perspective, she will be able to predict that Wigner may obtain the outcome $w=ok$ even when she subjectively experiences $a=1$, thus avoiding the paradox. In general, in order to keep consistency, propositions must be qualified with adequate contexts. However, we should note that keeping track of the context quickly becomes unpractical in settings with several agents and measurements, as it requires agents to keep a record in their memory that is exponentially large on the number of measurements. We formalize context in Section~\ref{sec:suggestions}.

\paragraph{Truth and trust.}
The  Frauchiger-Renner assumption \textbf{C} is reflected in the distribution and truth axioms of modal logic. 
The distribution axiom (agents can chain their reasoning)
generalizes \textbf C, which corresponds to the special case of consistency of the statements of the agents.
The truth axiom says that if a fact $\phi$ is always true for one agent, then it must be true for all agents. It allows agents to simplify their reasoning, e.g.\ Bob simplifies  ``I know that Alice knows $\phi$'' to ``I know $\phi$'', and
it implies  trust in the universality of agents experiences.
Thought experiments like Wigner's friend and  Frauchiger-Renner suggest that this is not the case for quantum agents, and that a fully quantum generalization of modal logic must drop the truth axiom.  This can be easily seen in the Wigner's friend scenario, where a statement about a definite outcome can be true for an observer inside the lab, but its truth value  is not well-defined for Wigner himself. We return to trust in Section~\ref{sec:suggestions}.

\subsection{Reformulating the Frauchiger-Renner paradox}

We can see how the combined assumptions of classical modal logic correspond roughly to the Frauchiger-Renner assumptions, and lead to the same contradiction.

\begin{theorem}[Kripke structures in unitary quantum theory are not sound]
\label{thm:nogoKripke}

It is possible to find a contradiction in Kripke structures equipped with the axioms of knowledge satisfying the following: 
\begin{enumerate}
    \item Agents are allowed to perform measurements on quantum systems. Each agent's memory be modelled by other agents as a  quantum system that may itself be measured. 
    \item The measurement statistics satisfy the Born rule, or at least the weak version of the Born rule as in assumption {\bf Q}. In particular, the truth assignments satisfy this.
    \item When an agent performs a measurement, this is modelled by the other agents as a reversible evolution on their lab (assumption {\bf U}).

    \item Possible worlds can be parametrized by outcomes of quantum measurements. In particular, when an agent performs a measurement, they may perform positive and negative introspection on the observed outcome (similarly to assumption {\bf S}). 
    \item Agents are allowed to know the setting of an experiment beforehand. 
\end{enumerate}

All of the above should be included in the common knowledge that determines the Kripke structure in order to find a contradiction. 

Specifically, we find that there exists a possible world in the Kripke structure where its axioms are not sound (Appendix~\ref{appendix:modal}).

\end{theorem}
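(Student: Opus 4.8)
The plan is to construct explicitly the offending Kripke structure $M = \langle \Sigma, \pi, \mathcal K_A, \mathcal K_B, \mathcal K_U, \mathcal K_W\rangle$ modelling the Frauchiger-Renner experiment, and then exhibit one world in which the axioms of knowledge derive a statement whose negation is simultaneously true there, so that the truth axiom fails and soundness breaks. Following the naive prescription discussed under ``Possible worlds semantics,'' I would take $\Sigma = \{ok,fail\}^2$, one world $s_{uw}$ for each pair of final outcomes $(u,w)$ of Ursula and Wigner, and single out the post-selected world $s_* = s_{ok,ok}$, which by condition 2 (the weak Born rule) carries nonzero weight $1/12$ and is therefore a legitimate element of $\Sigma$.

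Next I would fix the interpretation $\pi$ and the relations $\mathcal K_i$. The propositions needed are $u{=}ok$ and $w{=}ok$, together with the intermediate $a{=}1$ and $b{=}1$; condition 4 lets me treat the recorded final outcomes as classical labels, so $\pi(s_{uw},\cdot)$ fixes $u$ and $w$ directly, while the values of $a{=}1$ and $b{=}1$ are pinned down by the weak Born rule applied to the Hardy state $\ket{\psi}_{RASB}$. The equivalence relations then encode exactly the deterministic inferences licensed by condition 2 together with the unitarity condition 3: reading off the two rearrangements of $\ket{\psi}_{RASB}$ in the excerpt, $\mathcal K_U$ makes $u{=}ok\Rightarrow b{=}1$ valid, $\mathcal K_B$ makes $b{=}1\Rightarrow a{=}1$ valid, and $\mathcal K_A$ makes $a{=}1\Rightarrow w{=}fail$ valid. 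The crucial point flagged earlier is that condition 3 is indispensable here: without modelling the other labs as reversible (the completely-positive-map counterexample of Section~\ref{sec:reformulation}), Alice's inference collapses to $P[w{=}ok\mid a{=}1]=1/2$ and the relevant implication is no longer valid in $\mathcal K_A$, so the chain cannot even be started.

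I would then run the axioms. By the knowledge generalization rule, the experimental setting (condition 5) and the physical laws \textbf{Q} and \textbf{U} (conditions 2, 3) are common knowledge, so each of the three implications is known to the relevant agent and, via the truth axiom, which lets one agent import another's certainties as in assumption \textbf{C}, to a single agent reasoning about the whole chain. The definition of knowledge together with condition 4 (introspection on the observed outcome, mirroring \textbf{S}) gives $(M,s_*)\models K(u{=}ok)$. Applying the distribution axiom three times to chain $u{=}ok\Rightarrow b{=}1$, $b{=}1\Rightarrow a{=}1$ and $a{=}1\Rightarrow w{=}fail$ then yields $(M,s_*)\models K(w{=}fail)$. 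But $s_*=s_{ok,ok}$ means $\pi(s_*, w{=}ok)=\textbf{true}$, i.e.\ $(M,s_*)\models \neg(w{=}fail)$, whereas the truth axiom forces $K(w{=}fail)$ to imply $(M,s_*)\models w{=}fail$ --- a contradiction, so the axioms are not sound at $s_*$.

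The main obstacle I anticipate is not the chaining, which is mechanical once the relations are fixed, but justifying that the intermediate propositions $a{=}1$ and $b{=}1$ admit a consistent truth assignment and that the three implications are genuinely \emph{valid} (true in every world an agent deems possible) rather than merely true at $s_*$. This is exactly the tension between the deterministic inferences forced by the weak Born rule and the fact, noted under ``Possible worlds semantics,'' that the outcomes $a,b$ are measured over and so may be ill-defined across $\Sigma$. I would resolve it by defining each $\mathcal K_i$ minimally, so that an agent's accessible worlds are precisely those singled out by the post-selection together with that agent's deterministic overlap, and then checking case by case that no world carries conflicting truth values; this ensures the contradiction at $s_*$ is genuine rather than an artefact of an overconstrained structure.
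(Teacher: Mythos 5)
Your proposal is correct and follows essentially the same route as the paper's proof: model the Frauchiger--Renner experiment as a Kripke structure with \textbf{Q}, \textbf{U} and the protocol as common knowledge, establish the three deterministic implications $u{=}ok\Rightarrow b{=}1$, $b{=}1\Rightarrow a{=}1$, $a{=}1\Rightarrow w{=}fail$ as valid via \textbf{S} and the truth and knowledge-generalization axioms, chain them with the distribution axiom, and derive $w{=}ok\wedge w{=}fail$ in the post-selected world. Your explicit construction of $\Sigma=\{ok,fail\}^2$ and the worry about consistent truth assignments for the intermediate propositions $a,b$ are more detailed than the paper's treatment (which simply asserts the implications hold in all worlds), but the logical skeleton and the use of each axiom are the same.
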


\begin{proof}
We take $M$ to be the Kripke structure describing the setting of the Frauchiger-Renner experiment and all agents involved:  it includes \textbf Q, \textbf U and the protocol of the experiment as part of the common knowledge. The proof follows the exposition of the contradiction in the first part of this paper, except that now we apply the axioms of knowledge. 
\begin{enumerate}
    \item Using \textbf S, we can say that in all worlds where Alice sees outcome $a=1$ she can predict $w=fail$, since in this setting the agents always make the same measurements, and, due to assumptions \textbf Q and \textbf U, the conclusion is always valid for these measurements. That is,  $$(M,s)\models K_A (a=1 \implies w=fail), \quad \forall \ s.$$
    
    \item It follows from the truth axiom that 
    $$ (M,s)\models  (a=1 \implies w=fail), \quad  \forall \ s,$$
    which removes one layer of reasoning from the statement.
    
    \item Since the above holds for all worlds $s$, we can apply the knowledge generalization rule to obtain that for all agents $i$,  $\models K_i  (a=1 \implies w=fail), \forall \ i$. 
    
    \item Applying the same reasoning to the other statements, we obtain
    \begin{align*}
       &\models K_i  (u=ok \implies b=1), \\
      &\models K_i  (b=1 \implies a=1),\\
      &\models K_i  (a=1 \implies     w=fail), \qquad \forall i. 
    \end{align*}

    \item Using the distribution axiom, we obtain the common knowledge $$\models K_i  (u=ok \implies w=fail), \forall i.$$
    
    \item Plugging in the fact that the experiment halts in a particular world $s$ where Wigner knows that the final outcomes are $w=ok, u=ok$, we obtain $$(s,M)\models K_W [u=ok \wedge w=ok \wedge  (u=ok \implies w=fail)].$$
    
    \item Now we use the distribution axiom to simplify this to 
\begin{align*}
     (s,M)&\models K_W [u=ok \wedge w=ok \wedge  (u=ok \implies w=fail)] \\
      \Leftrightarrow \ 
     (s,M)&\models K_W [u=ok\wedge  (u=ok \implies w=fail)   \wedge w=ok ] \\
      \Leftrightarrow \ 
     (s,M)&\models K_W [ w=fail  \wedge w=ok    ].
\end{align*}

    \item Using again the truth axiom, we  obtain   $$(s,M)\models  [w=ok \wedge w=fail].$$ 
    \item Using ${\bf S }$, $``(w=fail) \implies \operatorname{not} (w = ok),"$  we obtain a logical contradiction.
    
\end{enumerate}

\end{proof}

\subsection{Suggestions to generalize modal logic}
\label{sec:suggestions}

We may now attempt to implement the suggestions introduced in Section~\ref{sec:quantumproblems} to relax the axioms of classical epistemic modal logic, and test them in the Frauchiger-Renner setting. 
As a first step, we will replace the truth axiom with a subjective notion of trust between agents. We will see that, while necessary, this does not suffice --- we still obtain a contradiction. We also explore possible ways to implement the idea of contexts.

\subsubsection{Replacing truth with trust.}
\label{subsec:trust}
The truth axiom says that if an agent ``knows'' something, it is objectively true. As we saw, in quantum theory this does not always hold.

\begin{definition}[Trust]
    We say that an agent $i$ trusts an agent $j$ (and denote it by $j \leadsto i$ ) if and only if $$(M,s)\models K_i \ K_j \ \phi \implies K_i \ \phi,$$ for all $\phi, s$. 
\end{definition}

\paragraph{Trust is not transitive.} We may equip a Kripke structure with trust relations. These are not necessarily symmetric, or transitive.\footnote{In particular, the trust structure is not necessarily a pre-order in the set of agents.}
To see this, consider the case $A \leadsto B \leadsto C$, that is Alice trusts Bob who trusts Charlie. If  $K_A K_C \phi$ (Alice knows that Charlie knows $\phi$), it does not necessarily follow that $K_A \phi$, as Alice could think that Charlie is wrong or that their experience is not transferable, as in the case of observing outcomes of quantum measurements; for this we would need the intermediate agent Bob, as $ K_A K_B K_C \phi \implies K_A K_B \phi  \implies K_A \phi$. However, if there was no way for Bob to be sure that Charlie indeed knows $\phi$, then the trust chain would not reach Alice. If this situation seems far-fetched, think of the following closer-to-life instantiation of the example: Alice is a newspaper editor, Bob one of her journalists, and Charlie one of Bob's secret sources inside a criminal organization. The key here is that Alice trusts everything that Bob knows, independently of how he reached that knowledge (and in particular if he learned $\phi$ through Charlie, it suffices for Alice to know that Bob knows $\phi$). On the other hand, if Charlie went directly to Alice with some information, Alice, who doesn't know that Charlie is Bob's trusted source, would have no reason to trust it; indeed if we apply the trust condition in a different order, $ K_A  K_B ( K_C \phi ) \implies K_A K_C \phi$, we cannot reach the final conclusion $K_A \phi$.

\paragraph{Trust relations in the Frauchiger-Renner experiment.}
In the Frauchiger-Renner setting, we can specify for each agent who they trust, and at what point in the experiment. We have the following trust structure:
\begin{align}
    A_{t=0,1,2} \leadsto B_{t=2} \leadsto U_{t=3}\leadsto  W_{t=4} \leadsto A_{t=0,1,2} .
    \label{eq:trustFR}
\end{align}
Note that for example, Ursula at time $t=3$ may think of Alice as simply a quantum system to be measured and not an agent full capable of reasoning. However, Alice's reasoning about the outcomes of the experiment will reach Ursula via Bob. 
We can have additional trust relations, for example $A_{t=1} \leadsto W_{t=1,2}$. However, they are not necessary to derive the contradiction.

\paragraph{Trust is necessary for descriptions of quantum measurements.} In the Frauchiger-Renner setup, Ursula trusts Bob but not Alice. In particular, Ursula may model  Alice's memory as a quantum system $A$, and Alice's measurement of a quantum system $R$, as unitary evolution on the joint system $R\otimes A$, without referring to Alice's experience of observing an outcome $a$. We can formalize the statement ``Alice obtained the outcome $a=0$ at time $t=1$,  as $K_{A_{t=1}} (a=0)$, where $A_{t=1}$ represents agent Alice at that time: that is, at time $t=1$, Alice knows that she observed outcome $a=0$. We do not require that Ursula make such statements, for example, as she may want to avoid any proposition of the sort ``this quantum system $A$ `knows' something." Instead, she will get information that she can relate to via trusted agents, as we will see. Indeed, we will model the whole experiment such that agents only need to make statements of the form $K_i \phi$ about trusted agents --- Ursula will never have to model Alice's measurement outcome directly. 
Bob's statement ``If I observe $b=1$ I know that Alice must have $a=1$" can be  formalized as 
$$\models K_{B_{t=0,1,2}} [K_{B_{t=2}} ( b=1) \implies K_{A_{t=1,2}}(a=1)],$$ 
where the time stamps refer to all the times at which agents may have this knowledge. 

\paragraph{Analysis of the Frauchiger-Renner experiment.} 
Theorem~\ref{thm:nogoTrust} below tells us that while replacing truth with trust is necessary to treat quantum settings, it is not sufficient to give us a sound logic system --- at least not with the trust structure (\ref{eq:trustFR}) and with assumption  {\bf S}. This was to be expected, as the original Frauchiger-Renner consistency assumption {\bf C} is  closer to the trust condition than to the truth axiom. One way out is to say that  (\ref{eq:trustFR})  is not a valid trust structure, which should be carefully justified: we'd need criteria that allow us to reject (\ref{eq:trustFR}) and still keep natural trust structures of more classical situations, all the while considering that all agents are physical, and ultimately quantum, systems. Another possible solution is to reject {\bf S}; here the challenge is to do it in a way that still lets us make non-trivial statements based on subjective observations. Formalizing context, in the next section, is a first step in that direction.

\begin{restatable}[Subjective trust in the Frauchiger-Renner setting]{theorem}{ThmTrustNoGo}

\label{thm:nogoTrust}
In the setting of Theorem~\ref{thm:nogoKripke}, if we replace the truth axiom with the trust structure (\ref{eq:trustFR}) for the Frauchiger-Renner thought experiment, we obtain a contradiction. 
\end{restatable}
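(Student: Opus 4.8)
The plan is to re-run the nine-step argument of Theorem~\ref{thm:nogoKripke} almost verbatim, but to excise every appeal to the truth axiom and route the reasoning instead through the trust links of the cycle (\ref{eq:trustFR}). Recall that the original proof invoked the truth axiom twice: once to strip $K_A$ from Alice's conditional prediction $a=1\implies w=fail$ so as to promote it to an agent-independent fact, and once at the very end to convert Wigner's knowledge $K_W[w=ok\wedge w=fail]$ into an objective contradiction. Both of these moves are exactly what trust is designed to replace in a weaker, directional way, so the heart of the proof is to check that the directional links $A\leadsto B\leadsto U\leadsto W$ supplied by (\ref{eq:trustFR}) suffice to carry the chain of implications forward to Wigner.

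First I would establish, as common knowledge of the Kripke structure $M$ (via the knowledge generalization rule applied to $\textbf{Q}$, $\textbf{U}$ and the protocol, exactly as in steps 1--4 of Theorem~\ref{thm:nogoKripke}), each agent's subjective conditional: Ursula's $u=ok\implies b=1$, Bob's $b=1\implies a=1$, and Alice's $a=1\implies w=fail$. Because these are common knowledge, the nested statements $K_B K_A(a=1\implies w=fail)$, $K_U K_B(b=1\implies a=1)$, and so on all hold, which is precisely the premise the trust condition $K_i K_j\phi\implies K_i\phi$ requires.

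Next I would propagate these conditionals forward along the chain. The link $A\leadsto B$ lets Bob inherit $a=1\implies w=fail$; the link $B\leadsto U$ lets Ursula inherit both $b=1\implies a=1$ and $a=1\implies w=fail$; and the link $U\leadsto W$ lets Wigner inherit $u=ok\implies w=fail$ once Ursula has assembled it. At each stage the distribution axiom (which remains available) chains the inherited implications, so Ursula reaches $K_U(u=ok\implies w=fail)$ and Wigner reaches $K_W(u=ok\implies w=fail)$. Plugging in the halting world $s$, where Wigner has observed $u=ok$ and $w=ok$, distribution yields $K_W(w=fail)$ and hence $K_W(w\neq ok)$, while Wigner also holds $K_W(w=ok)$. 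Assumption $\textbf{S}$ forbids an agent from being certain of both $w=ok$ and $w\neq ok$, giving the contradiction --- crucially, this final step is now internal to Wigner and needs no truth axiom.

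The main obstacle I expect is the bookkeeping of the time-stamped knowledge operators and the verification that the trust relations hold at exactly the instants required: for instance, that the link $A_{t=0,1,2}\leadsto B_{t=2}$ really does let Bob at his measurement time absorb Alice's earlier prediction, and similarly up the chain, so that the premise $K_i K_j\phi$ of each trust step is genuinely licensed by common knowledge at the relevant time. A secondary subtlety is making sure the argument never covertly re-uses truth: the whole point is that the contradiction must surface as an inconsistency in a single agent's (Wigner's) certainties via $\textbf{S}$, rather than as an objective falsehood, which confirms that replacing the truth axiom with the directional trust cycle (\ref{eq:trustFR}) still reproduces --- rather than dissolves --- the paradox.
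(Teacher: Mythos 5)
Your proposal is correct and follows essentially the same route as the paper's proof: establish the agents' conditional predictions as (nested) common knowledge before the experiment, use the directional trust links of the cycle (\ref{eq:trustFR}) to strip inner knowledge operators where the truth axiom would have been used, chain with the distribution axiom to reach $K_W(u=ok \implies w=fail)$, and then derive the contradiction inside Wigner's own certainties via \textbf{S} in the halting world. The only cosmetic difference is that the paper carries the fully nested operators $K_{W}K_{U}K_{B}K_{A}[\cdot]$ explicitly and peels them from the inside out (ending with the trivial trust $W \leadsto W$), whereas you describe the same propagation more informally; your stated concerns about time-stamp bookkeeping are exactly what the paper's explicit notation handles.
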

The proof can be found in Appendix~\ref{appendix:proofs}.

\paragraph{Trust and persistence of agents.} We can use trust relations to characterize the persistence of agent's memories over time --- which is particularly relevant in quantum settings. In our analysis above, for example, we had the trust relation $A_{t=1} \leadsto A_{t=2}$,  but not necessarily $A_{t=2} \leadsto A_{t=3}$. This is because Alice's memory is tampered with at time $t=3$, so agent ``Alice at time $t=3$'' does not necessarily trust agent ``Alice at time $t=2$.''  

\subsection{Context}

A classic way to incorporate time stamps and other context in the logic is connected to the notion of \emph{two-dimensional semantics} (see~\cite{sep-two-dimensional-semantics} for a review). It is a variant of possible world semantics that uses two (or more) kinds of parameters in the truth evaluation, rather than possible worlds alone.
The motivation for the introduction of additional parameter in order to determine a truth value is the fact that every proposition is true or false given a specific, often implicit context \cite{Kripke2007}.
An example would be a proposition $\phi=$ ``Hobbits live in the Shire''; if by ``the Shire'' the speaker meant an area, say, in England, then the statement would be false. However, if we were talking about the Shire on Middle-Earth, in the third age, the statement may be valid.
The notion of context can include time, place, topic of when or how the statement was produced --- we can see it as a fine-graining of the proposition, which for modularity purposes is registered in another space. If in a given proposition one of the parameters is not specified, then we assume that the proposition refers to all possible values of this parameter.

For simplicity, we will assume that in our case the parameters of the context are {\bf (1)} the agent who produces the proposition, and {\bf (2)} whether she produces it in the local or global context (recall the discussion from Section~\ref{par:intr&cont}).  

\begin{definition}[Context in Kripke structures]\label{def:context}~ 
The \textbf{context space} $\mathcal C$ is formed by elements of the form $$c= \begin{pmatrix} A_1,e_1\\ A_2,e_2 \\ \vdots\end{pmatrix},$$ where $A_1,A_2,...$ are  agents, and $e_1, e_2,...\in\{\ell, \text{g}\}$ specify whether the proposition $\phi$ made by a corresponding agent was meant locally or globally.\footnote{This choice of context space is specific to our setting.}
Propositions $\phi \in \Phi$ can themselves be  associated with contexts, in bundles of the form $ (c , \phi)  \in \mathcal C \times \Phi$, about which agents can make statements. 

A \textbf{truth interpretation} is now a function 
$\pi:\Sigma\times\Phi\times \mathcal C\to\{\textbf{true},\textbf{false}\}$, which defines a truth assignment for a proposition $\phi\in\Phi$ in a possible world $s\in\Sigma$, given a context $c\in C$.

To model \textbf{agents' knowledge}, we must also take the context into account explicitly. First, we note that  the binary equivalence relation $\mathcal K_A^e$ is now parametrized by the context $c=(A, e)$.
Therefore the statement `an agent $A$ ``knows'' $\phi$ in a world $s$' is generalized to
$$(M,s) \models K_A^e\,(c, \phi), \ where \ c=\begin{pmatrix} A_1,e_1 \\ \vdots\\ A_n,e_n\end{pmatrix} \iff  \left[ 
 (s,t) \in \mathcal K_A^e\implies \pi (t, \phi,\begin{pmatrix} A_1,e_1 \\ \vdots\\ A_n,e_n\\ A,e \end{pmatrix} ) = {\bf true}
\right].
$$
This way, the relation between the knowledge operators and context is simple: every time we would (in traditional modal logic, using the truth axiom) remove one of the operators $K_i^{e_i}$, here we cautiously append it to the context: 
$$(M,s)\models K_C^{e_C}\ K_B^{e_B}\ \bigg(\begin{pmatrix} A_1,e_n\\ \vdots\\ A_n,e_2\end{pmatrix}, \ \phi\bigg) \implies (M,s)\models K_C^{e_C}\ 
\bigg(\begin{pmatrix} A_1,e_1\\ \vdots\\ A_n,e_n\\ B,e_B\end{pmatrix}, \ \phi\bigg),$$
for all choices of $\phi, s$ and $e_B$. 
Thus, every time the number of $K_i^{e_i}$ operators in a proposition is reduced, the dimension of the context vector is increased.
This operation does not reflect any trust relations between different agents; on the opposite, it simply relays the fact that a proposition carries along a context of how and by whom it was produced.   

In order to apply the \textbf{distribution axiom}, the appropriate contexts must match:   we reformulate the axiom as
$$(M,s) \models (K_A^e (c,\phi) \wedge  K_A^e  [ (c,\phi) \implies (c', \psi)]) \implies (M,s) \models K_A^e (c',\psi) .$$

\end{definition}

For example, consider the three statements: 
\begin{align*}
    1.\quad & (\begin{pmatrix} A_{t_2},\text{$\ell$}\end{pmatrix} , \  a=1), \\
    2.\quad & (\begin{pmatrix} A_{t_2},\text{g}\end{pmatrix}, \ \ket{\psi}_{RA}=\frac{1}{\sqrt{3}}\ket{00}_{RA}+\sqrt{\frac{2}{3}}\ket{11}_{RA}),\\
    3.\quad & ( \begin{pmatrix} A_{t_2},\text{$\ell$}\\ B_{t_3},\text{$\ell$}\end{pmatrix}, \ a=1) .
\end{align*}
The first corresponds to the statement ``$a=1$'', produced locally by Alice at time $t=2$.
The second is her statement at the same time $t=2$ ``My lab is in a global state $\frac{1}{\sqrt{3}}\ket{00}_{RA}+\sqrt{\frac{2}{3}}\ket{11}_{RA}$,'' which is meant in a global context, independently of her local observation of $a=1$. 
Finally, the third example is the chained statement made by Bob at time $t=3$ that  ``A at time $t=2$ could say `$a=1$','' locally:

\begin{align*}
    (M,s) \models K_{B_{t_3}}^{\ell} K_{A_{t_2}}^{\ell} (a=1) 
    &\implies \quad \models K_{B_{t_3}}^{\ell} (\begin{pmatrix} A_{t_2},\ell\end{pmatrix}, \ a=1) \\ 
    & \implies  \quad \models (\begin{pmatrix} A_{t_2},\ell\\ B_{t_3},\ell\end{pmatrix} , \ a=1).
\end{align*}

Due to the new restriction on the distribution axiom, the proofs of Theorems~\ref{thm:nogoKripke} and \ref{thm:nogoTrust} do not go through, and  as a result, the agents do not reach a contradiction.
We give an intuition of why we cannot reach the contradiction in Appendix~\ref{appendix:proofs}.

\begin{restatable}[Context in the Frauchiger-Renner setting]{claim}{ThmContext}
When taking \emph{context} into account, we do not reach a logical contradiction in the Frauchiger-Renner setting. 
\end{restatable}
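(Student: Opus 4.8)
The plan is to show that, with the context-augmented semantics of Definition~\ref{def:context}, the chain of deductions that produced the contradiction in Theorem~\ref{thm:nogoKripke} can no longer be assembled, because the contexts attached to the relevant propositions fail to match at the crucial point where the distribution axiom would be invoked. The strategy is to replay the nine steps of the proof of Theorem~\ref{thm:nogoKripke} verbatim, but now tracking the context vector of every proposition, and to locate the precise step that breaks.

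First I would re-derive the three basic implications, but now as context-carrying propositions. Alice's statement ``$a=1 \implies w=fail$'' is made in her \emph{global} context, since she reasons about the whole lab evolving unitarily under Ursula's and Wigner's measurements; so it lives in the bundle $(\begin{pmatrix} A_{t_1},\text{g}\end{pmatrix}, \ a=1 \implies w=fail)$. In sharp contrast, the premise ``$a=1$'' that Alice actually observes, and that ultimately gets fed into Ursula's reasoning through Bob, is a \emph{local} statement, carrying context $\ell$, as in the worked example following Definition~\ref{def:context}. Similarly I would record that Ursula's inference ``$u=ok \implies b=1$'' and Bob's ``$b=1 \implies a=1$'' each come with their own local/global context labels dictated by whether the agent is reasoning about her own observed outcome or about the externally-described unitary evolution of another lab.

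The key step is then the attempt to chain these implications via the distribution axiom. Under the reformulated distribution axiom, $(M,s) \models (K_A^e (c,\phi) \wedge K_A^e [ (c,\phi) \implies (c', \psi)]) \implies (M,s) \models K_A^e (c',\psi)$, the \emph{same} context $c$ must appear both in the known antecedent and in the known implication. But the contradiction in Theorem~\ref{thm:nogoKripke} requires composing a statement about a locally-observed outcome (e.g. $a=1$ with context $\ell$) with an implication that was only ever valid in a global context (e.g. Alice's ``$a=1 \implies w=fail$'' with context g). Because $\ell \neq \text{g}$, the contexts do not match, and the distribution axiom does not fire; equivalently, whenever we strip a knowledge operator we are forced to \emph{append} it to the context vector rather than discard it, so the two branches of the chain accumulate incompatible context vectors and can never be merged. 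I would make this explicit by writing out the context vector on each side of the would-be contradiction $w=ok \wedge w=fail$ and observing that ``$w=fail$'' inherits the global context of Alice's prediction while ``$w=ok$'' is asserted in Wigner's local context, so the two propositions are not about the same context-bundle and their conjunction is not a contradiction.

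The main obstacle I expect is not in any single algebraic step but in justifying, cleanly and without hidden appeals to the truth axiom, exactly which context label attaches to each intermediate proposition --- in particular, arguing that the locally-observed premises are genuinely local while the unitarity-based implications are genuinely global, and that no legitimate rule (neither the context-append operation, which carries no trust content, nor the trust relations of Section~\ref{subsec:trust}) lets one silently convert a global-context statement into a local-context one of the form needed to complete the chain. Once that bookkeeping is pinned down, the failure of the distribution axiom to apply is immediate, and the absence of a contradiction follows. Since the excerpt defers the detailed argument to Appendix~\ref{appendix:proofs} and frames the result as giving ``an intuition of why we cannot reach the contradiction,'' I would present the above as a context-tracking walkthrough of the Theorem~\ref{thm:nogoKripke} proof, pointing to the mismatched-context step as the exact place where the paradox dissolves.
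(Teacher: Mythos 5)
Your overall strategy --- replay the derivation while tracking context bundles and show that the reformulated distribution axiom cannot fire at the chaining step --- is exactly the paper's strategy. But you locate the obstruction in the wrong place, and the place you chose leaves a real gap. You argue that the mismatch is between a \emph{local} label $\ell$ on the observed premise and a \emph{global} label $\mathrm{g}$ on Alice's implication ``$a=1 \implies w=fail$''. The paper's proof sketch never uses the $\mathrm{g}$ label at all: every knowledge operator in its derivation carries $\ell$, since each agent is reasoning from their subjectively observed outcome (Alice derives ``$a=1 \implies w=fail$'' precisely by conditioning on her local outcome $a=1$ and evolving the post-selected state --- that is a local-context statement; the global perspective is the one in which she would \emph{refrain} from making that prediction). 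The obstruction in the paper is purely structural: because stripping a knowledge operator $K_{A_{t=1}}^{\ell}$ appends the entry $(A_{t=1},\ell)$ to the context vector, the two statements Wigner would need to chain end up with contexts of different \emph{lengths} --- one is $(B_{t=2},\ell;\ U_{t=3},\ell;\ W_{t=4},\ell)$ and the other is $(A_{t=1},\ell;\ B_{t=2},\ell;\ U_{t=3},\ell;\ W_{t=4},\ell)$, the extra entry coming from the deeper nesting of Alice's implication inside $K_{A_{t=1}}^{\ell}$. Since the distribution axiom requires the bundle $(c,\phi)$ in the antecedent to match exactly, the chain breaks even though every label is $\ell$.

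The gap in your version is that it depends on justifying the assignment of $\mathrm{g}$ to Alice's implication, and you yourself flag this as the main obstacle. If an adversary (or simply a careful reconstruction of the Frauchiger--Renner reasoning) assigns $\ell$ to every statement --- which is the natural reading, since each implication is derived by an agent conditioning on their own observed outcome --- then your $\ell \neq \mathrm{g}$ obstruction evaporates and your argument no longer blocks the contradiction. The paper's length-mismatch mechanism is robust against exactly this: it blocks the chain in the all-local case. You should therefore drop the local/global mismatch as the load-bearing step and instead track the accumulation of context entries under the append-on-strip rule; the local/global distinction is motivation for introducing context (Section~\ref{sec:quantumproblems}), not the engine of this particular proof sketch.
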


The main problem with introducing context is the rigidity of the constraints to combine  statements; it seems to perplex even the simplest situations. For  example, take a classical setting where Alice asks Bob to tell her the color of a hat she's wearing. Let's suppose Bob tells her that it is red; in that case, the statement available to Alice is not ``My hat is red'', but rather ``My hat is red, by Bob's account''. For all Alice knows, Bob might be slightly colorblind, so she is always keeping in mind that this point of view is his and his only. Subsequently, that does not allow her to make a definitive conclusion about the color of her hat. 

There are ways for the constraint to be weakened, for example, by introducing \emph{trust} in a similar fashion as we did above, and allowing it to reduce the dimension of context vectors. This approach is in practice identical to what we've seen in section~\ref{subsec:trust}, and,  in the Frauchiger-Renner setting, with the trust relation introduced there, it leads to a paradox. In summary: at the moment there seems to be no middle ground between obtaining a paradox and not being able to perform even simple modal operations in classical settings. 
One possible way out is to say ``well, in the classical example above, the trust relation Bob$\leadsto$Alice is natural and valid, while in the Frauchiger-Renner setting there is a fundamental reason why that chain of trust is invalid.'' This frames the main problem as finding conditions for validity of trust relations.
Incidentally, another price to pay for logical consistency is \emph{complexity}: agents need to keep track of the context, which increases in size at every application of modal operators.

\section{Conclusions}
\label{sec:conclusions}

The take-home message of this paper is that we do not have a sound logic system to analyze agents' reasoning when quantum measurements are involved.
The most successful classical logical system, modal logic, fails to generalize at the level of each individual axiom, and candidate workarounds, like keeping track of the context of each statement, are unpractical, requiring exponentially large memories. A next step in this research is to work out conditions for when and how contexts can be combined, in order to recover the usual modal logic in classical settings. In the following, we review the contributions of this work  in more detail.

\paragraph{Completing the paradox.}
The first contribution of this work was to amend the Frauchiger-Renner result \cite{Frauchiger2018}. We identified one missing assumption, used in the proof of but not stated in the theorem. This assumption has to do with the type of evolution happening in an agent's lab, as seen by other agents.  We showed that without it the original proof does not hold, by taking the example of a collapse theory that satisfies the original assumptions \textbf Q,  \textbf S and  \textbf C but does not lead to a contradiction. 
Making the missing assumption explicit fixes the theorem, and helps classify how different interpretations of quantum theory fit into the paradox.

\paragraph{Analysis of classical logic.}
In the Frauchiger-Renner setup, agents construct statements by implementing  a certain logic system. These logical rules and axioms can be to some extent correlated with known classical axioms of knowledge; however, in the quantum case these rules  must  be weakened or extended.
In the second part of this paper, we reviewed classical modal logic, which is successfully used in classical multi-agent settings to reason about knowledge. 
We compared its axioms to the Frauchiger-Renner assumptions and saw why the classical axioms do not generalize to the fully quantum case. We formalized this by applying modal logic to the  Frauchiger-Renner and rederiving a paradox, which tells us that the resulting structure is not sound. That is, the semantics of modal logic, when combined with axioms of quantum theory \textbf Q and \textbf U, leads to a contradiction. While modal logic is a powerful tool to process multi-agent settings, it is also a subject to strict requirements on the rationality of the agents. As we have seen, in the quantum case the agents cannot even in principle be considered reliable. We also do not have sufficient criteria for the type of setups in which it is suitable to use classical logic.

\paragraph{Proposal to generalize modal logic.}
We showed that even if we drop the truth axiom and replace it with weaker, subjective trust relations among agents, we still obtain a contradiction in the Frauchiger-Renner experiment (with the trust relation of Eq.~\ref{eq:trustFR}). One could argue that it is that particular trust relation that is not  valid, and the task then becomes to identify criteria for valid trust relations: intuitive criteria that would include trust in traditional  classical settings but rule out relations that lead to contradictions in quantum setups.
One way to formalize this  is to keep track of the context in which statements are taken, apply all logical steps possible, and only then then use valid trust relations to combine and compress statements and contexts. Without trust relations, context-tracking gives us logical consistence at a high price in complexity: 
we lose the compressing power of modal logic, and the ability to combine most statements; in addition,  agents need memories that grow exponentially on the number of statements considered.

\paragraph{Relation to other work.} 
The general search for a structure which would adequately embed the propositions made in a quantum mechanical context, was started by von Neumann \cite{vonNeumann1955, vonNeumann1936}. He aimed it to `resemble the usual calculus of propositions with respect to \emph{and}, \emph{or}, and \emph{not}' \cite{vonNeumann1936}.
In motivating the need for quantum logic systems, Putman~\cite{Putnam1969} highlights that the emerging framework should recover the strength of  classical logic in the adequate limit (we should still be able to use it to solve puzzles such as Example~\ref{fig:hats}), and that in addition it should explain the counter-intuitive aspects of quantum theory: settings that result in paradoxes when analyzed through classical logic (such as the Frauchiger-Renner experiment) should not lead to contradictions through the lens of quantum logic \cite{Putnam1969}.  Current proposals for quantum logic systems \cite{QLStanford, Chiara2013part1, Chiara2013part2,vonNeumann1955, vonNeumann1936, DELStanford,Baltag2011, Bacciagaluppi2009} fall short of these two goals. 
Indeed, according to the review of the  quantum logic field, carried out by Bacciagaluppi in 2009 \cite{Bacciagaluppi2009}, 
current proposals at quantum logic can replace classical logic for particular experiments involving quantum systems; however, it is still unclear whether they could replace classical logic on a global scale, and most importantly, they are not powerful enough to solve quantum mechanical paradoxes. 
This is because  most attempts at quantum logic to date  are primarily focused on stripping down the Boolean structure of classical logic, and as far as we are aware they do not concern settings with multiple agents' perspectives, in the way that modal logic does.

For example, in
\emph{A quantum computational semantics for epistemic logical operators} by Beltrametti et al \cite{Chiara2013part1, Chiara2013part2}, truth values are not represented as bits but as qubits, and truth perspectives of individual agents are identified with a change of basis, while logical connectives are associated with quantum gates. However, such interpretation of truth values of propositions deprives them of their deterministic and binary nature, which we would like to keep when arguing about agents' reasoning.

Alternatively, one can use dynamic epistemic logic \cite{DELStanford}, which considers the consequences of public and private announcements that define the dynamics of the Kripke structure. Some work has been made in this direction \cite{Baltag2011}; however, it does not encompass situations where agents think about other agents' outcomes, and thus the application of the model is not quite clear.

One can adapt another types of logic to this situation, for example, paraconsistent logic \cite{PRStanford}, which draws an inspiration in a dialetheistic view \cite{DialetheismStanford}. This view supports the idea of the existence of true contradictions; thus, paraconsistent logic denies the principle of explosion which allows proof of any proposition via its negation. Nesting the contradictions right in the theory might give rise to substantial development of how logic would work for paradoxes similar to the one in Frauchiger-Renner's setup. Research on applications of paraconsistent logic to quantum settings is still embryonic.

\paragraph{Pedagogical contributions.}
As a bonus contribution to the community, this article presented a more pedagogical introduction to the Frauchiger-Renner experiment and to classical modal logic than we could find elsewhere in the literature. Finally, to settle one of the most common questions that the paper has raised in the last two years: the {\bf ch} in Frau{\bf ch}iger is pronounced as in Lo{\bf ch} Ness.

\section*{Acknowledgements}
We thank Renato Renner and Sandu Popescu for  discussions, and the QPL and TQC referees for feedback on the first version of this manuscript. 
LdR acknowledges support from the Swiss
National Science Foundation through 
SNSF project No.\ $200020\_165843$ and through the 
the National Centre of
Competence in Research \emph{Quantum Science and Technology}
(QSIT), and from  the FQXi grant \emph{Physics of the observer}.


\newpage 
\appendix

\addcontentsline{toc}{section}{\sc{Appendix}}

\section*{\textsc{Appendix}}

\section{Modal logic in more detail}
\label{appendix:modal}

\subsection{Application of modal logic to a classical setting}

Here we provide a solution to the puzzle in Example~\ref{fig:hats}.
Let us follow the inference of Arren. First, he considers what Ged thinks of this whole setting. Ged sees both Tehanu and Arren, and there are four possibilities: Tehanu wears black and Arren wears white, Tehanu wears white and Arren wears black, they both wear white or, finally, they both wear black. However, in the latter case, Ged would have all the reasons to announce that his hat is white; as he didn't do that, both Arren and Tehanu can conclude that it is not the case. Second, Arren considers Tehanu's thinking; if Arren wears black then Tehanu can only wear a white hat, and thus announce that fact. However, he does not, and so Arren concludes that he wears a white hat.

One could attempt to formalize Arren's thinking. Let us denote the state of the system as a set $(*,*,*)$ where the first element corresponds to Ged's state, second - to Tehanu's state, and third - to Arren's state. A black hat will be marked as 1, and a white hat as 0.

Initially, Arren has no information about his or Ged or Tehanu's state, so it is reasonable for him to assume all possible combinations - there are 7 in total, namely, $(0,0,0)$,$(0,0,1)$,$(0,1,0)$,$(1,0,0)$,$(0,1,1)$,\\$(1,0,1)$,$(1,1,0)$. His mind can be represented as a table which consists of three columns: first one portraying his assumption of what the situation is, and two last ones corresponding to what would Tehanu and Ged think in that case.  

In the beginning of the puzzle Arren's table looks like this:
\medskip\\
\begin{tabular}{ |p{3cm}|p{7cm}|p{4cm}|  }
\hline
\multicolumn{3}{|c|}{The initial state ($t=0$), according to Arren} \\
\hline
State &What Tehanu thinks &What Ged thinks\\
\hline
$(0,0,0)$		&$(0,0,0)$,$(1,0,0)$,$(0,1,0)$,$(1,1,0)$	&$(0,0,0)$,$(1,0,0)$\\
\hline
$(0,0,1)$		&$(0,0,1)$,$(1,0,1)$,$(0,1,1)$			&$(0,0,1)$,$(1,0,1)$\\
\hline
$(0,1,0)$		&$(0,0,0)$,$(1,0,0)$,$(0,1,0)$,$(1,1,0)$	&$(0,1,0)$,$(1,1,0)$\\
\hline
$(1,0,0)$		&$(0,0,0)$,$(1,0,0)$,$(0,1,0)$,$(1,1,0)$	&$(0,0,0)$,$(1,0,0)$\\
\hline 
$(0,1,1)$		&$(0,0,1)$,$(1,0,1)$,$(0,1,1)$			&$(0,1,1)$\\
\hline
$(1,0,1)$		&$(0,0,1)$,$(1,0,1)$,$(0,1,1)$			&$(0,0,1)$,$(1,0,1)$\\
\hline
$(1,1,0)$		&$(0,0,0)$,$(1,0,0)$,$(0,1,0)$,$(1,1,0)$	&$(0,1,0)$,$(1,1,0)$\\
\hline
\end{tabular}
\medskip\\
It can be clearly seen here that Ged can make an announcement only if the state is $(0,1,1)$; in all other cases he can wear both white and black hat from his point of view (last column). After he does not say anything, we have to exclude $(0,1,1)$ from everywhere in the table, given that Arren supposes Tehanu to use the same reasoning. 
That is, we have $K_A K_T K_G\ \neg (0,1,1)$, and by the truth and distribution axioms, $K_A \ \neg (0,1,1)$.
The table then transforms into:
\medskip\\
\begin{tabular}{ |p{3cm}|p{7cm}|p{4cm}|  }
\hline
\multicolumn{3}{|c|}{The state as Ged remains silent ($t=1$), according to Arren} \\
\hline
State &What Tehanu thinks &What Ged thinks\\
\hline
$(0,0,0)$		&$(0,0,0)$,$(1,0,0)$,$(0,1,0)$,$(1,1,0)$	&$(0,0,0)$,$(1,0,0)$\\
\hline
$(0,0,1)$		&$(0,0,1)$,$(1,0,1)$					&$(0,0,1)$,$(1,0,1)$\\
\hline
$(0,1,0)$		&$(0,0,0)$,$(1,0,0)$,$(0,1,0)$,$(1,1,0)$	&$(0,1,0)$,$(1,1,0)$\\
\hline
$(1,0,0)$		&$(0,0,0)$,$(1,0,0)$,$(0,1,0)$,$(1,1,0)$	&$(0,0,0)$,$(1,0,0)$\\
\hline 
$(1,0,1)$		&$(0,0,1)$,$(1,0,1)$					&$(0,0,1)$,$(1,0,1)$\\
\hline
$(1,1,0)$		&$(0,0,0)$,$(1,0,0)$,$(0,1,0)$,$(1,1,0)$	&$(0,1,0)$,$(1,1,0)$\\
\hline
\end{tabular}
\medskip\\
Now Arren can turn his attention to Tehanu's column; if the state of the system is $(0,0,1)$ or $(1,0,1)$ then Tehanu can announce that she is wearing a white hat; however, she does not, thus we can eliminate these possibilities. That is, we have $K_A K_T\  \neg (x,0,1), \ \forall \ x$.
\medskip\\
\begin{tabular}{ |p{3cm}|p{7cm}|p{4cm}|  }
\hline
\multicolumn{3}{|c|}{The state as Ged and Tehanu remain silent ($t=2$), according to Arren} \\
\hline
State &What Tehanu thinks &What Ged thinks\\
\hline
$(0,0,0)$		&$(0,0,0)$,$(1,0,0)$,$(0,1,0)$,$(1,1,0)$	&$(0,0,0)$,$(1,0,0)$\\
\hline
$(0,1,0)$		&$(0,0,0)$,$(1,0,0)$,$(0,1,0)$,$(1,1,0)$	&$(0,1,0)$,$(1,1,0)$\\
\hline
$(1,0,0)$		&$(0,0,0)$,$(1,0,0)$,$(0,1,0)$,$(1,1,0)$	&$(0,0,0)$,$(1,0,0)$\\
\hline 
$(1,1,0)$		&$(0,0,0)$,$(1,0,0)$,$(0,1,0)$,$(1,1,0)$	&$(0,1,0)$,$(1,1,0)$\\
\hline
\end{tabular}
\medskip\\
In all remaining scenarios, Arren wears a white hat, so he can safely announce it at $t=3$.

Thus, here a thinking agent considers all possible worlds that can happen, and step by step, he eliminates them from the picture, and finally narrow their variety down, being able to make a definite conclusion.

\subsection{Principles, development and other logic systems}

\paragraph{Soundness and completeness.}
The aim of any logic system is to distinguish between valid and invalid statements; the whole set of axioms and rules is designed in order to prove whether a given statement is true or false. The main challenges faced by the logician developing such a system are to make sure that the said system is \textit{sound} and \textit{complete} \cite{LogicStanford}. When we say that the system is \textit{sound}, we mean that any statement that is wrong cannot be proven right, or, in other words, there are no false positives; \textit{completeness} means that all statements that are claimed to be valid can be in principle proven in the system.
We have to emphasize here that not all statements in the set of propositions can be validated; one can connect it to the G{\"o}del's first incompleteness theorem, which states that any consistent formal system is incomplete in a sense that there exist statements which cannot be proven right or false \cite{Godel1931}.

\paragraph{Development and related logic systems.}
If we want to describe an agent who is able to make predictions about the outcomes of an experiment, we have to, first of all, have the means to model the knowledge the agent possesses. It is one of the central concepts that can be used to describe the experiments similar to the Wigner's friend experiment and its extended version, where the agents are actively using the information they have to draw certain conclusions. In other words, we can say that they store statements and combine them according to the rules established before the start of the experiment.
In general, the observers performing the experiment operate and communicate using the statements of the type ``it is certain that'' or ``it is possible that'', in other words, these expressions qualify the truth of a statement. This idea resembles the concept of possible worlds, introduced by Leibniz \cite{Leibniz1771}, and followed by the development of modal logic \cite{Lewis1918, Lewis1959, Carnap1988, Goldblatt2006, Kripke2007}. The language of the modal logic allows for the formal analysis of the occasionally complex linguistic constructions, and epistemic logic, which is a special type of modal logic, deals with agents discussing possible worlds of the structure.
Alternatively, as the agents express their subjective beliefs and predictions in their arguments, the subjective logic is introduced by \cite{Josang2001}, which assigns to propositions a level of belief, of disbelief and of uncertainty, and establishes update rules, as well as the interaction of agents and trust networks. 
Another approach, called Dempster-Shafer theory, is based on two ideas: obtaining degrees of belief for one question from subjective probabilities for a related question, and Dempster's rule for combining such degrees of belief when they are based on independent items of evidence. In essence, the degree of belief in a proposition depends primarily upon the number of answers (to the related questions) containing the proposition, and the subjective probability of each answer. The theory assigns a degree of belief and of plausibility for each statement, which serve as two bounds for belief in the statement, and describes update rules for collecting and forgetting information \cite{Yager2008, Fritz2015}. 
Finally, fuzzy logic, a form of many-valued logic, also allows for appointing fuzzy truth values between 0 and 1 \cite{Novak2012, Zadeh1996, Zimmermann1996}. It is connected to the concept of partial truth, where the truth value can range between being absolutely true or absolutely untrue.

\section{Proofs}
\label{appendix:proofs}
In this appendix you may find the proofs of some of the results in the main text.

\ThmTrustNoGo*

\begin{proof}
Even before the experiment starts, we can make the following statements, reflecting what counterfactual reasoning of the different agents about future measurements: 
\begin{align*}
  \models   K_{A_{t<3}} &[K_{A_{t=1}} (a=1) \implies K_{W_{t=4}} (w=fail)], \\
   \models  K_{B_{t<4}} &[K_{B_{t=2}} (b=1) \implies K_{A_{t=1,2}} (a=1)], \\
    \models K_{U_{t<5}} &[K_{U_{t=3}} (u=ok) \implies K_{B_{t=2,3}} (b=1)].
\end{align*}
If we add in the compatible assumption that agents can talk to each other before the experiment begins, we obtain in particular
\begin{align*}
   \models K_{B_{t=2}} K_{A_{t=1}} &[K_{A_{t=1}} (a=1) \implies K_{W_{t=4}} (w=fail)], \\
     \models K_{U_{t=3}}  K_{B_{t=2}} &[K_{B_{t=2}} (b=1) \implies K_{A_{t=1,2}} (a=1)], \\
     \models K_{W_{t=4}}  K_{U_{t=3}} &[K_{U_{t=3}} (u=ok) \implies K_{B_{t=2,3}} (b=1)], 
\end{align*}
and so on, up to third- and fourth-order statements like
\begin{align*}
   \models  K_{U_{t=3}}  K_{B_{t=2}} K_{A_{t=1}} &[K_{A_{t=1}} (a=1) \implies K_{W_{t=4}} (w=fail)], \\
\models  K_{W_{t=4}} K_{U_{t=3}}  K_{B_{t=2}} K_{A_{t=1}} &[K_{A_{t=1}} (a=1) \implies K_{W_{t=4}} (w=fail)], \\
     \models K_{W_{t=4}}  K_{U_{t=3}}  K_{B_{t=2}} &[K_{B_{t=2}} (b=1) \implies K_{A_{t=1,2}} (a=1)]. 
\end{align*} 
There are many ways to apply the trust relations and distribution axiom in order to combine the above statements. For now, we would prefer for each agent's direct statements to be about the knowledge of trusted agents: for example, we want to avoid statements of the form $K_U K_A \phi$, but rather go the route  $K_U K_B K_A \phi \to K_U  K_B  \phi \to K_U \phi$. 
To do so, let us take all the statements that start with Wigner's knowledge, 
\begin{align*}
\models K_{W_{t=4}}  K_{U_{t=3}} &[K_{U_{t=3}} (u=ok) \implies K_{B_{t=2}} (b=1)],\\
 \models K_{W_{t=4}}  K_{U_{t=3}}  K_{B_{t=2}} &[K_{B_{t=2}} (b=1) \implies K_{A_{t=1}} (a=1)],\\
\models  K_{W_{t=4}} K_{U_{t=3}}  K_{B_{t=2}} K_{A_{t=1}} &[K_{A_{t=1}} (a=1) \implies K_{W_{t=4}} (w=fail)].
\end{align*}
First we combine the  bottom two statements,
\begin{align*}
 \models K_{W_{t=4}}  K_{U_{t=3}}  K_{B_{t=2}} \Big(\quad&  \Big[ K_{B_{t=2}} (b=1) \implies K_{A_{t=1}} (a=1) \Big] \\ 
 \wedge &\Big[ K_{A_{t=1}} [K_{A_{t=1}} (a=1) \implies K_{W_{t=4}} (w=fail)]\Big]\Big), 
\end{align*}
And now we apply the trust condition $A_{t=1}\leadsto B_{t=2}$,
\begin{align*}
 \models K_{W_{t=4}}  K_{U_{t=3}}  K_{B_{t=2}} \Big(\quad & 
 \Big[ K_{B_{t=2}} (b=1) \implies K_{A_{t=1}} (a=1) \Big]\\
 \wedge & \Big[ K_{A_{t=1}} (a=1) \implies K_{W_{t=4}} (w=fail)\Big]\Big), 
\end{align*}
followed by the distribution axiom for Bob, 
\begin{align*}
 \models K_{W_{t=4}}  K_{U_{t=3}}  K_{B_{t=2}} &[K_{B_{t=2}} (b=1) \implies K_{W_{t=4}} (w=fail)]. 
\end{align*}
Now we combine this with the remaining statement about Wigner's knowledge, 
\begin{align*}
 \models K_{W_{t=4}}  K_{U_{t=3}}   \Big( \quad & \Big[K_{U_{t=3}} (u=ok) \implies K_{B_{t=2}} (b=1)\Big]\\  \wedge  & \Big[ K_{B_{t=2}} [K_{B_{t=2}} (b=1) \implies K_{W_{t=4}} (w=fail)]\Big]\Big). 
\end{align*}
Again, we apply the trust condition $B_{t=2}\leadsto U_{t=3}$, followed by the distribution axiom for Ursula, obtaining
\begin{align}
 \models K_{W_{t=4}}  K_{U_{t=3}}   &[K_{U_{t=3}} (u=ok) \implies  K_{W_{t=4}} (w=fail)].  
 \label{eq:chainInference}
\end{align}
Note that the above holds for all possible worlds, and that Wigner and Ursula can reach the conclusion $K_{U_{t=3}} (u=ok) \implies  K_{W_{t=4}} (w=fail)$ before the experiment even starts. 
Additionally, at the end of the experiment, Ursula and Wigner learn of each other's outcomes, in particular
$$ \forall \ s: \qquad (M,s)  \models K_{U_{t=3}} (u=ok)  \implies (M,s)  \models K_{W_{t=4}} K_{U_{t=3,4}} (u=ok).$$
Now we run the experiment and analyse the case $u=w= ok$, that is
$$\exists \ s: \qquad (M,s)  \models K_{U_{t=3,4} }(u=ok) \wedge K_{W_{t=4} }(w=ok).$$ Given the above, we also have 
$$\exists \ s: \qquad  (M,s)  \models  K_{W_{t=4} } [K_{U_{t=3,4} }(u=ok) \wedge  K_{W_{t=4} }(w=ok)]. $$
Now we can combine this with (\ref{eq:chainInference}), obtaining
\begin{align*}
    \exists \ s: \quad  (M,s)  \models  K_{W_{t=4} } \Big( \quad &  \Big[K_{U_{t=3} }(u=ok) \wedge  K_{W_{t=4} }(w=ok)\Big] \\  \wedge&  K_{U_{t=3}}   \Big[K_{U_{t=3}} (u=ok) \implies  K_{W_{t=4}} (w=fail)\Big]   \Big),
\end{align*}
and applying the trust relation $U_{t=3} \leadsto W_{t=4}$, and the distribution axiom again, we get 
\begin{align*}
    \exists \ s: \quad  (M,s)  \models  K_{W_{t=4} } [   K_{W_{t=4} }(w=ok)\wedge  K_{W_{t=4}} (w=fail)].
\end{align*}
Finally, we can apply the trivial trust relation $W_{t=4} \leadsto W_{t=4}$ to obtain 
\begin{align*}
    \exists \ s: \quad  (M,s)  \models  K_{W_{t=4} } [   w=ok\wedge   w=fail].
\end{align*}
To obtain the contradiction, we use {\bf S}, that is $$``(w=fail) \implies \operatorname{not} (w = ok)."$$ 
\end{proof}

\ThmContext*

We do not prove this claim directly; we only give an intuition of why we cannot reach the contradiction.

\paragraph{Proof sketch.}~

We start from the point of the proof of Theorem~\ref{thm:nogoTrust} where we combine two statements about Wigner's knowledge,
\begin{align*}
 \models K_{W_{t=4}}^\ell  K_{U_{t=3}}^\ell  K_{B_{t=2}}^\ell \Big(& \quad \Big[ K_{B_{t=2}}^\ell (b=1) \implies K_{A_{t=1}}^\ell (a=1) \Big]\\ & \wedge \Big[ K_{A_{t=1}}^\ell [K_{A_{t=1}}^\ell (a=1) \implies K_{W_{t=4}}^\ell (w=fail)]\Big]\Big).
\end{align*}

Then we compress this expression in a way that has been referred to in the Definition~\ref{def:context}, in order to obtain explicit structure of the context of statements concerning Wigner's knowledge,
\begin{align*}
\models \quad  
    & \Big(
        \begin{pmatrix} B_{t=2},\ell \\ U_{t=3},\ell \\ W_{t=4}, \ell \end{pmatrix}, \quad
        [K_{B_{t=2}}^\ell(b=1) \implies K_{A_{t=1}}^\ell (a=1)]
    \Big ) \\ 
\wedge \ 
    &\Big(
        \begin{pmatrix} A_{t=1},\ell \\ B_{t=2},\ell \\ U_{t=3},\ell \\ W_{t=4}, \ell \end{pmatrix}, \quad 
        [K_{A_{t=1}}^\ell(a=1) \implies K_{W_{t=4}}^\ell (w=fail)]
    \Big)
\end{align*}
At this point we can no longer apply the distribution axiom, due to the extra context element on the second term.
Thus, here the contradiction cannot be reached.

\section{Interpretations in more detail}
\label{appendix:interpret}
Here we give short introductions to each of the interpretations discussed in Section \ref{sec:interpretations}.

\paragraph{Copenhagen interpretation.}
Quantum theory keeps the core of classical physics as assigning certain mathematical structures to points in time and space; however, the behavior of said structure is different, as well as its meaning; it no longer represents a material, but rather an informational structure. This can be summarized in the famous words by Heisenberg \cite{Heisenberg1958}:
\begin{displayquote}
In consequence, we are finally led to believe that the laws of nature which we formulate mathematically in quantum theory deal no longer with the particles themselves but with our knowledge of the elementary particles. The question of whether these particles exist in space and time ``in themselves" can thus no longer be posed in this form. We can only talk about the processes that occur when, through the interaction of the particle with some other physical system such as a measuring instrument, the behavior of the particle is to be disclosed. The conception of the objective reality of the elementary particles has thus evaporated in a curious way, not into the fog of some new, obscure, or not yet understood reality concept, but into the transparent clarity of a mathematics that represents no longer the behavior of the elementary particles but rather our knowledge of this behavior.
\end{displayquote}

In order to deal with technical difficulties that arise within the framework where physical states of the material world are substituted by actions of obtaining knowledge, a new philosophy was created by Heisenberg, Pauli, Born and Bohr: the Copenhagen interpretation.

The essence of the Copenhagen interpretation, as described by Bohr \cite{Bohr1934, Bohr1958}, is:
\begin{displayquote}
In our description of nature the purpose is not to disclose the real essence of phenomena but only to track down as far as possible relations between the multifold aspects of our experience. [1934]
\end{displayquote}
\begin{displayquote}
[...] the appropriate physical interpretation of the symbolic quantum-mechanical formalism amounts only to predictions, of determinate or statistical character, pertaining to individual phenomena appearing under conditions defined by classical physical concepts. [1958]
\end{displayquote}
\noindent
In his work Bohr also points out that the essential part of scientists' goal is to communicate the details of the setup they have used in experiment and the contents of the obtained outcomes, which they are bound to express in concepts of classical physics \cite{Bohr1958}:
\begin{displayquote}
As the goal of science is to augment and order our experience, every analysis of the conditions of human knowledge must rest on considerations of the character and scope of our means of communication. [...] In this connection, it is imperative to realize that in every account of physical experience one must describe both experimental conditions and observations by the same means of communication as one used in classical physics. 
\end{displayquote}
Bohr does not imply in this article that there is an objective reality that corresponds to the laws of classical physics.

The Copenhagen solution to the inconsistency present between classical and quantum physics is as follows: for the practical purpose of reasoning about experiments, we split the modelling of nature  into two parts. The first part would be the observing system, which includes the minds and bodies of the humans acquiring knowledge via the experiment, and also the measuring devices; the measurement setup and the following outcome experiences are described in ordinary language of classical physics. The agent (observer) thus should be able to say, for example, ``I switched on the device and ten minutes later I saw the pointer turn to the left''.
The second part of nature is the observed system, which would be described in the language of quantum mechanics.

Such separation between two parts of nature is called the \emph{Heisenberg cut}. Above the cut one uses classical descriptions, while below the cut one uses a quantum mechanical description. This cut can be moved, for example, from below the measuring device to above it, thus including the device into the quantum part of the system. In principle, one can even move the cut above a given observer, for example in multi-agent settings, as discussed here.

The observing system is left with a choice of an appropriate question to ask --- the mathematical description of the theory does not specify the question. The Copenhagen interpretation approaches this  by postulating conscious observers in a physical universe (able to communicate and having an interest in acquisition of knowledge), who are essentially free to make the choice of the question \cite{Bohr1958}:
\begin{displayquote}
The freedom of experimentation, presupposed in classical physics, is of course retained and corresponds to the free choice of experimental arrangement for which the mathematical structure of the quantum mechanical formalism offers the appropriate latitude.
\end{displayquote}

To give an example of how the Heisenberg cut can be moved, consider John von Neumann's understanding of quantum mechanics. Von Neumann named the physical posing of the question as \textit{Processes of type 1}, while so-called \textit{Processes of type 2} correspond to the subsequent evolution of the quantum state \cite{vonNeumann1955}.
A process of type 1 (choosing to perform a $Z$ measurement on a spin, for example) may have a part that is not physically modelled (the decision that takes place in the abstract ``mind'' of the agent), and a part that is physically modelled (the brain processes associated with that decision, the pressing of a button corresponding to the $Z$ measurement). The latter then affects the course of processes of type 2 (the physical interaction of the measurement apparatus with the spin). This allows us to move 
 the Heisenberg cut above the observer's experience of the outcome, leaving only the abstract ``choice of measurement'' in the classical realm.

\paragraph{Bohmian theory.}
The pilot-wave approach of Bohm \cite{Bohm1952} assumes that both things exist at once: the state of the physical universe specified by quantum theory (superposition of all possible configurations) and the classical world (a certain configuration that is there even when unobserved) which determines the contents of our consciousness. The quantum state evolves continuously according to von Neumann's processes of type 2, and the real classical world is enveloped in the quantum world in a way which makes all predictions of quantum theory correct. It is a deterministic hidden variable theory which is also non-local: for example, in some  multi-particle settings the velocities of  each single particle depends on the positions of all other particles.

The theory is based on the following postulates: first, there is a configuration of the universe, described by coordinates, which is an element of the configuration space. Its dynamics is governed by the guiding equation. Entities like electrons act like actual particles, their velocities at any moment fully determined by the pilot wave, which in turn depends on the wave function. In this view, each electron is like a surfer: it occupies a particular place at every specific moment in time, yet its motion is dictated by the motion of a spread-out wave \cite{Quanta}. Second, the configuration is distributed according to the usual probability density (square modulus of the wave function, which is, in its turn, governed by Schr{\"o}dinger's equation) at all times. This state (called quantum equilibrium) agrees with the results of the standard theory in the usual regime of single observers.

In his later works \cite{Bohm2008} Bohm makes an effort to explain consciousness of the observer in terms of his theory.
The problem this approach faces is somehow similar to the problem of the many-worlds: if the universe is governed by von Neumann's Process of type 2 alone, then the physical states of devices and brains are not partitioned, thus failing to produce distinct experiences depending on outcomes of experiments. For a short summary of the theory, see \cite{Sudbery2016}.

\paragraph{Relative-state formalism.}
Stefan Wolf and Veronika Baumann in their article ``On Formalisms and Interpretations'' \cite{Baumann2018} draw a comparison between textbook, collapse quantum mechanics and what they call ``the relative-state formalism'' (adapted from \cite{Everett1957}). They stress that the universal and unitary quantum theory, postulated by the relative-state formalism, calls for a formulation of alternative version of the Born rule, and, subsequently, provides another type of formalism, different from what the standard quantum mechanical model has to offer. 

The relative-state formalism reproduces the same probabilities as standard quantum mechanics for measurements on the same system, but is inequivalent to it in case of observers observing another observers, or, in other words, in case of a Wigner friend's-like scenario. Here we will briefly outline two formalisms presented in the paper, and then apply them to the Wigner's friend experiment.

According to the standard quantum mechanics, the Born rule, which gives the probability of a measurement result for a quantum state $\ket{\phi}$ and an observable $A=\sum_a a\ket{a}\bra{a}$, looks like:
\begin{gather*}
p_\phi (a)=Tr(\ket{a}\inprod{a}{\phi}\bra{\phi})=|{\inprod{a}{\phi}}|^2.
\end{gather*}
Now consider two observers $O_1$ and $O_2$ consecutively measuring a quantum system in the initial state $\phi$; their measurements are given by families of projectors $\{\ket{a}\bra{a}\}$ and $\{\ket{b}\bra{b}\}$ respectively. Then the conditional probability of result $b$ given $a$ is:
\begin{gather*}
p_\phi (b|a)=\frac{p_\phi (a,b)}{p_\phi (a)}=|{\inprod{b}{a}}|^2, \text{where } p_\phi (a,b) \text{ is the joint probability of } a \text{ and } b.
\end{gather*}

According to the relative-state formalism, measurements can be represented as isometries, correlating the state of the observer with the state of the observed system. For an observer $O$ (with an orthogonal set $\{\ket{A_a}\}$ recording the result) measuring a state $\phi$ of a system S with respect to a projector family $\{\ket{a}\bra{a}_S\}$ the isometry is
\begin{gather*}
V_O: \mathcal{H}_S\rightarrow\mathcal{H}_S\otimes\mathcal{H}_O, \ket{a}_S\mapsto\ket{a}_S\otimes\ket{A_a}_O .
\end{gather*}
The authors then introduce two postulates for the case of the relative-state formalism:
\begin{enumerate}
\item The probability of observing $a$ is given by \\
$q_\phi (a)=Tr(\mathbbm{1}_S\otimes\ket{A_a}\bra{A_a}_O\cdot V_O\ket{\phi}\bra{\phi}V_O^\dagger)$.
\item The joint probability of states is given by the trace of the tensor product of the projectors onto those states acting on the overall state.
\end{enumerate}
If we again consider two observers $O_1$ and $O_2$ consecutively measuring a quantum system in the initial state $\phi$, then according to the postulates, the conditional probability for the measurement result $b$ given $a$ is:
\begin{gather*} 
q_\phi (b|a)=\frac{q_\phi (a,b)}{q_\phi (a)}=\frac{Tr(\ket{A_a}\bra{A_a}\otimes\ket{B_b}\bra{B_b}V_{O_2}V_{O_1}\ket{\phi}\bra{\phi}V_{O_2}^\dagger V_{O_1}^\dagger)}{\sum_b Tr(\ket{A_a}\bra{A_a}\otimes\ket{B_b}\bra{B_b}V_{O_2}V_{O_1}\ket{\phi}\bra{\phi}V_{O_2}^\dagger V_{O_1}^\dagger)}.
\end{gather*}
It is then shown that the conditional probabilities for two observers consecutively measuring the same quantum system are same for standard quantum mechanics and relative-state formalism. However, if we consider additionally a superobserver SO measuring the joint system with $\{\ket{b}\bra{b}_{S,O}\}$ these two cases are proven to be inequivalent. 

Three different scenarios can be distinguished in this situation: first, when both the observer and the superobserver use the standard quantum mechanics measurement-update rule (authors call it \textit{objective collapse} model); second, all agents can use the relative-state formalism (\textit{no-collapse} model); and, finally, third, when the friend uses the standard quantum mechanics and Wigner (to whom the lab and the friend evolve unitarily) applies the relative-state formalism (\textit{subjective collapse} model). In the last case, if the friend calculates the conditional probabilities with a collapse model, while Wigner uses the relative-state formalism, they will give contradicting answers to the same question regardless of the friend's measurement result (for a detailed discussion see \cite{Baumann2018}).

\paragraph{Many-worlds.}
This interpretation assumes that the quantum state of the universe exists (and evolves under the rules of Process of type 2); the fact that we obtain specific outcomes is considered merely a subjective illusion, as the collapse events of the Copenhagen interpretation are no longer occurring. The conscious experiences of an outcome are side products of this continuous physical evolution; the state of the the universe splits, forming so-called ``branches'' in the original formulation of Everett \cite{Everett1957}.

However, the many-worlds interpretation faces various problems, including the problem of appropriate basis choice and branching events; as summarized by Zurek in the comparison he draws between many-worlds and Copenhagen \cite{Zurek2006}:
\begin{displayquote}
The similarity between the difficulties faced by these two viewpoints becomes apparent, nevertheless, when we ask the obvious question, ``Why do I, the observer, perceive only one of the outcomes?'' Quantum theory, with its freedom to rotate bases in Hilbert space, does not even clearly define which states of the Universe correspond to the ``branches''. Yet, our perception of a reality with alternatives --- not a coherent superposition of alternatives --- demands an explanation of when, where, and how it is decided what the observer actually records. Considered in this context, the Many Worlds Interpretation in its original version does not really abolish the border but pushes it all the way to the boundary between the physical Universe and consciousness. 
\end{displayquote}

Thus, the question of why the experimenter perceives only one outcome is this picture still remains. Additionally, quantum mechanical description nowhere implies that the state of mind cannot be a superposition of states; thus it is not paradoxical \cite{Penrose1999}.

The attempts to interpret probabilities in the many-worlds scenario have been made, for example, by Lev Vaidman. He postulates that, while the wave function of a world specifies what we ``feel'' in the world, the absolute value of the coefficient behind it specifies the illusion of probability: all possible outcomes still take place, yet there is no difference between our experience and the experience of an agent with genuine probability \cite{Vaidman2001, Vaidman2016}. 

\paragraph{QBism.}
QBism is an interpretation of quantum mechanics which puts the agent performing the measurement in the spotlight of the theory. It was developed by Carlton Caves, Christopher Fuchs and R{\"u}diger Schack \cite{Caves2002, Fuchs2014, Fuchs2016}. QBism argues that a quantum state represents not the physical, but the epistemic state of the observer who assigns it according to her view on her future experiences. The probabilities of different outcomes the agent can get are then simply viewed as that agent's degrees of belief in each of a variety of these outcomes.
QBism does not consider the quantum state to possess the complete information about the system; on the contrary, QBist observers each model their own state of the system, based on the knowledge available, and these states can, in principle, be different. This makes QBism a single-user theory, and thus the way of performing the analysis of the Wigner's friend-type scenarios, where agents have to compare their experiences (purely subjective in the case of QBism), becomes unclear.


\bibliographystyle{eptcs}


\end{document}